\let\cl@chapter\undefined
\newcommand{\Z}{\mathbb{Z}}
\newcommand{\Tr}{^\mathrm T}
\newcommand{\vb}{\mathbf{b}}
\newcommand{\vc}{\mathbf{c}}
\newcommand{\vd}{\mathbf{d}}
\newcommand{\vf}{\mathbf{f}}
\newcommand{\vt}{\mathbf{t}}
\newcommand{\vu}{\mathbf{u}}
\newcommand{\vw}{\mathbf{w}}
\newcommand{\vx}{\mathbf{x}}
\newcommand{\CO}{\mathcal{O}}
\newcommand{\CT}{\mathcal{T}}
\newcommand{\MA}{\mathsf{A}}
\newcommand{\set}[1]{\{{#1}\}}
\newcommand{\len}[1]{\lvert{#1}\rvert}
\newcommand{\sys}{\textsc{Wiser}\xspace}
\newcommand{\basefee}{bf}
\newcommand{\propfee}{pf}
\newcommand{\mylIf}[2]{\textbf{if}~#1~\textbf{then}~#2}
\newcommand{\NP}{\textsc{NP}}
\newcommand{\FPL}{\textsc{FPL}}
\newif\ifnotes\notesfalse
 \definecolor{mygrey}{gray}{0.50}
 \newcommand{\notenamem}[2]{{\textcolor{mygrey}{\footnotesize{\bf (#1:} {#2}{\bf ) }}}}
 \newcommand{\noteswarning}{{\begin{center} {\Large WARNING: NOTES ON}\end{center}}}
 \newcommand{\notenamem}[2]{{}}
 \newcommand{\noteswarning}{{}}
 \newcommand{\znote}[1]{{\notenamem{Zeta}{#1}}}
 \newcommand{\inote}[1]{{\notenamem{Iosif}{#1}}}
 \newcommand{\stnote}[1]{{\notenamem{Samarth}{#1}}}
\begin{document}

\title{Wiser: Increasing Throughput in Payment Channel Networks\\with Transaction Aggregation}

\titlerunning{Wiser: Transaction Aggregation in PCNs}

\author{Samarth Tiwari\inst{1}\thanks{Supported by ERC Starting Grant QIP--805241.} \and Michelle Yeo\inst{2} \and
Zeta Avarikioti\inst{3} \and Iosif Salem\inst{4} \and
Krzysztof Pietrzak\inst{2}\thanks{Supported by the Vienna Cybersecurity and Privacy Research Center (ViSP), funded by the Vienna business agency (Wirtschaftsagentur),  2020-2023.} \and
Stefan Schmid\inst{5}\thanks{Supported partially by the Austrian Science Fund (FWF) project I 4800-N (ADVISE), 2020-2023 and the Vienna Cybersecurity and Privacy Research Center (ViSP), funded by the Vienna business agency (Wirtschaftsagentur),  2020-2023.}
}

\authorrunning{S. Tiwari et al.}

\institute{Centrum Wiskunde \& Informatica, Amsterdam, The Netherlands\\
\email{samarth.tiwari@cwi.nl} \and
Institute of Science and Technology Austria \email{\{michelle.yeo, krzysztof.pietrzak\}@ist.ac.at} \and
TU Wien, Austria\\
\email{georgia.avarikioti@tuwien.ac.at} \and
TU Berlin, Germany\\
\email{iosif.salem@inet.tu-berlin.de} \and
TU Berlin, Germany \& University of Vienna, Austria\\
\email{stefan.schmid@tu-berlin.de}
}

\maketitle
\begin{abstract}

Payment channel networks (PCNs) are one of the most prominent solutions to the limited transaction throughput of blockchains. Nevertheless, PCNs suffer themselves from a throughput limitation due to the capital constraints of their channels. A similar dependence on high capital is also found in inter-bank payment settlements, where the so-called netting technique is used to mitigate liquidity demands.

In this work, we alleviate this limitation by introducing the notion of transaction aggregation: instead of executing transactions sequentially through a PCN, we enable senders to aggregate multiple transactions and execute them simultaneously to benefit from several amounts that may ``cancel out''. 
Two direct advantages of our proposal is the decrease in intermediary fees paid by senders as well as the obfuscation of the transaction data from the intermediaries.

We formulate the transaction aggregation as a computational problem, a generalization of the Bank Clearing Problem.  We present a generic framework for the transaction aggregation execution, and thereafter we propose \sys as an implementation of this framework in a specific hub-based setting.
To overcome the NP-hardness of the transaction aggregation problem, in \sys we propose a fixed-parameter linear algorithm for a special case of transaction aggregation as well as the Bank Clearing Problem.
\sys can also be seen as a modern variant of the \textit{Hawala} money transfer system, as well as a decentralized implementation of the overseas remittance service of Wise.

\end{abstract}
\keywords{Payment Channel Networks, Transaction Aggregation, Netting, Fixed Parameter Tractability, Privacy, Throughput}

\noteswarning

\section{Introduction}
\subsection{Motivation}
Payment channel networks (PCNs)~\cite{poon2015lightning,decker2015fast,decker2018eltoo,miller2017sprites} are a promising technology to significantly increase the transaction throughput of blockchain-based cryptocurrencies such as Bitcoin~\cite{nakamoto2008bitcoin}. Two parties can open a \emph{payment channel} by locking coins in a joint account on-chain. Thereafter, the parties can transact off-chain via updating the distribution of the coins in the channel to depict the new transaction. Only if the parties want to close the channel or allocate extra budget, do they need to go on-chain. Therefore, payment channels can facilitate arbitrary many transactions off-chain with a constant  number of on-chain transactions, significantly increasing  the transaction throughput.
These payment channels collectively constitute a \emph{payment channel network} (PCN) which allows two parties to transact off-chain even if they do not share a payment channel by using other parties as intermediaries in routing their transactions~\cite{poon2015lightning,decker2015fast}.

In order to provide high and reliable transaction throughput, PCNs require many well connected high-capacity payment channels. The reason is that channels may quickly deplete if several transactions are forwarded along a channel or path in the same direction, thus preventing any further use of the channel in that direction.
However, locking large amount of capital for a long period of time comes with a cost. To mitigate this cost, the users of PCNs ask for a service fee for relaying transactions of others through their channels.
As a result, a business opportunity arises: wealthy individuals can act as \emph{(transaction) hubs} by establishing well-connected high-capacity nodes, offering routing of transactions as a service. In this way, capital reserves can be used to generate passive income.

While this business model benefits the PCN's users, it also introduces several challenges:
(a) the liquidity of the hubs still poses a constraint to the PCN's overall liquidity and hence limits the transaction throughput,
(b) the transaction fees that are typically proportional to the transaction value may be high, and
(c) the hubs may learn a significant amount of transactions, thus compromising the privacy of users.

To address these challenges, we introduce the notion of \emph{transaction aggregation} (Figure~\ref{fig:transAggrIntro}). In high level, we enable users to aggregate multiple transactions and execute them simultaneously in the PCN instead of sequentially, i.e., one-by-one. This way multiple transactions may ``cancel out'' effectively reducing the transferred amounts and the corresponding transaction fees, thus increasing the PCN's liquidity and consequently the transaction throughput. 
Moreover, the transaction aggregation obfuscates the individual transactions routed through a channel, as only the aggregated amount is executed.

A closely related notion to transaction aggregation in finance is \emph{netting} that deals with the aggregation of various financial obligations, say, to mitigate risk. 
In particular, the application of netting in inter-bank  payment systems essentially tackles the first aforementioned challenge. The computation problem of optimal netting, i.e., optimal aggregation of payments across traditional banks, is known as the Bank Clearing Problem or BCP. 
The transaction aggregation problem  we introduce in this work is a more generic formulation of the BCP problem, hence demonstrating the potential impact of our work in the financial landscape even beyond blockchains.

Transaction aggregation can also be seen as a decentralized and privacy-enhancing variant of the payment system Wise (formerly TransferWise)~\cite{wise} built on payment channel networks. Wise  is a financial technology allowing convenient online transactions, often used to cheaply transfer money abroad. Another similar financial system is \emph{Hawala}~\cite{hawala}, a centuries-old system of fund transfer across long distances and overseas. A system that implements transaction aggregation can be interpreted as a modern and improved digital version of the \emph{Hawala} money transfer system, which is still in use in parts of the Middle East and South Asia.

\subsection{Contribution}
The goal of this work is to address the three challenges of PCNs in the hub-based business model, i.e., (a)
improve the throughput of payment channel networks while (b) minimizing fees in the process and also (c) accounting for privacy concerns. 
To do so, we propose a system which allows hubs to act cooperatively, combining their capital to serve all their clients. 
Any two users connected to any of the hubs can thereafter transact thought the hubs. 
The transactions collected over a fixed time interval are then aggregated into a single monetary flow through the network. This flow is consequently executed atomically through the PCN, meaning that either all the selected transactions are executed or none. 
We term the computational problem of finding a subset of transactions that can be aggregated into a feasible resultant flow as the \textit{transaction aggregation problem} (Figure \ref{fig:transAggrIntro}). 

Specifically, we present a generic framework which combines three building blocks: (a)~a protocol to enable optimal liquidity usage among hubs, (b)~an efficient algorithm for the transaction aggregation problem, and (c)~a protocol for atomic multi-channel state updates, i.e., the atomic execution of the selected transaction in the PCN.
We present a skeleton protocol, that is an abstraction of the composed protocols, to illustrate its benefits as well as technical challenges. We further determine the necessary properties for such a system and show that our abstraction satisfies them. 

We then present \sys, a concrete implementation of the generic framework. \sys uses channel factories~\cite{burchert2017scalable} and Thora~\cite{aumayr2022thora} as substantiations of the aforementioned building blocks ((a) and (c) respectively). As the final building block (b), we present a fixed-parameter linear\footnote{\FPL\ is a subclass of the fixed-parameter tractable (\textsc{FPT}) complexity class \cite{downey2012parameterized}.} algorithm to solve the transaction aggregation problem, which we first show to be NP-hard. For the execution of our algorithm, we leverage secure multi-party computation (MPC) to conceal channel balances and transaction data.


We further derive a connection between transaction aggregation and netting, and exploit this connection to enrich both notions. On one hand, our fixed-parameter linear algorithm developed for \sys is directly applicable to BCP. Our algorithm exhibits the best known parameterized complexity, improving dependence on number of transactions from pseudo-polynomial to linear. On the other hand, algorithms for the Bank Clearing Problem with inferior asymptotic complexity may still be faster in certain parameter regimes. \sys can be implemented with any of these optimization algorithms as a building block, so that work on BCP also bolsters the versatility of our protocol.

\begin{wrapfigure}[17]{R}{0.5\textwidth}
    \centering
    \includegraphics[width = 0.4\textwidth]{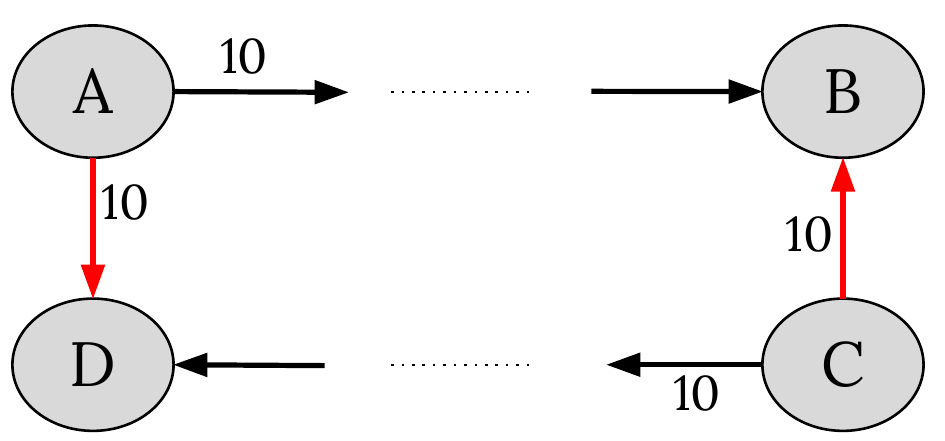}
    \caption{Transaction aggregation example. User A wants to make a payment of 10 coins to user B and user C wants to make a payment of 10 coins to user D. The A-to-B and C-to-D payment paths might render the transaction infeasible and incur routing fees linear to the path length. Instead, if A pays D and C pays B, each 10 coins, the transactions success depends on the capacity of fewer channels and the fees are reduced.}
    \label{fig:transAggrIntro}
\end{wrapfigure}

\subsection{Organisation}
The remainder of this paper is organized as follows.
In Section \ref{sec:modelandprotocol}, we present the model and additional assumptions, as well as the proposed skeleton protocol for solving the transaction aggregation problem and the desired properties it should fulfil.
In Section \ref{sec:protocol}, we present our implementation of the skeleton protocol, namely \sys,  and in Section \ref{sec:analysis} we prove that our implementation provides the desired guarantees.
In Section \ref{sec:discussion} we demonstrate the connection to the netting problem and discuss further extensions of our protocol. Finally, in Section~\ref{sec:related} we compare to related work and we conclude with Section~\ref{sec:conclusion}.

\section{Model and Protocol Overview}
\label{sec:modelandprotocol}
In this section, we first describe payment channel networks, and then introduce the assumptions of our work.
Later, we formulate transaction aggregation  as an optimization problem, then provide a high-level overview of our solution (skeleton protocol), and conclude the section by defining the desired properties for our system.

\subsection{Payment channel networks}\label{subsec:pcn}
In this subsection, we provide a brief introduction to payment channel networks with standard fees as they currently operate.

We assume that a payment channel is characterized by the public keys of the two users opening it, and the funds and forwarding fees that each user adds to each direction of the channel upon its creation \cite{poon2015lightning}. 
Let $u,v$ be two users that have opened a payment channel $\{(u,v), (v,u)\}$ and $c(e)$, $\basefee(e)$, $\propfee(e)$,  $e\in \{(u,v), (v,u)\}$ be the current capacity, the base forwarding fee, and the proportional forwarding fee of each channel direction, respectively.
Users $u$ and $v$ can do an arbitrary number of payments to each other by subtracting any non-negative amount $x\leq c(e)$,  $e\in \{(u,v), (v,u)\}$, and increasing $c(e')$ by $x$, where  $e'\in \{(u,v), (v,u)\}\setminus\set{e}$.
Note that the forwarding fees and $c(u,v) + c(v,u)$ is public information, while each summand (current balance) is not.

A payment channel network (PCN) is the network built over all payment channels. 
Users that do not share payment channels can perform payments via paths in the network, by covering a forwarding cost to the intermediate nodes of the path.
For example, if payment channels exist between $u,v$ and $v,w$, but not between $u,w$, then $u$ can transfer an amount $x$ to $w$ in two steps: (i) $u$ transfers $x + (\basefee(v,w) + \propfee(v,w)\cdot x)$ coins to $v$ and 
(ii) $v$ transfers $x$ coins to $w$, i.e. $\basefee(v,w) + \propfee(v,w)\cdot x$ are the forwarding fees charged by $v$. 
The transaction is feasible only if $x + (\basefee(v,w) + \propfee(v,w)\cdot x) \leq c(u,v)$ and $x\leq c(v,w)$. 

In general, suppose that user $s$ wants to pay $x$ coins to user $r$, via a payment path $\{(s, u_1), ..., (u_k, r) \}$. Then $u_k$ receives $rcv_k = x + (\basefee(u_k,r) + \propfee(u_k,r)\cdot x)$ and forwards $x$ coins to $r$, $u_{k-1}$ receives $rcv_{k-1} = x + (\basefee(u_{k-1}, u_k) + \propfee(u_{k-1}, u_k)\cdot rcv_k)$ and forwards $rcv_k$ coins to $u_k$, and so on.
Moreover, the transaction is feasible only if every node in the path has sufficient capacity to route the forwarded amount.
This is not always the case, since current balances are private and senders might have to try different payment paths towards a successful transaction.

A transaction along a payment path is successful when all intermediate transactions go through, and otherwise none of them is performed. That is, PCNs must ensure atomicity in transaction routing. In the Bitcoin Lightning Network, this is ensured by Hash Timelock Contracts (HTLCs) \cite{poon2015lightning}.

\begin{figure}
    \centering
    \includegraphics[scale=0.4]{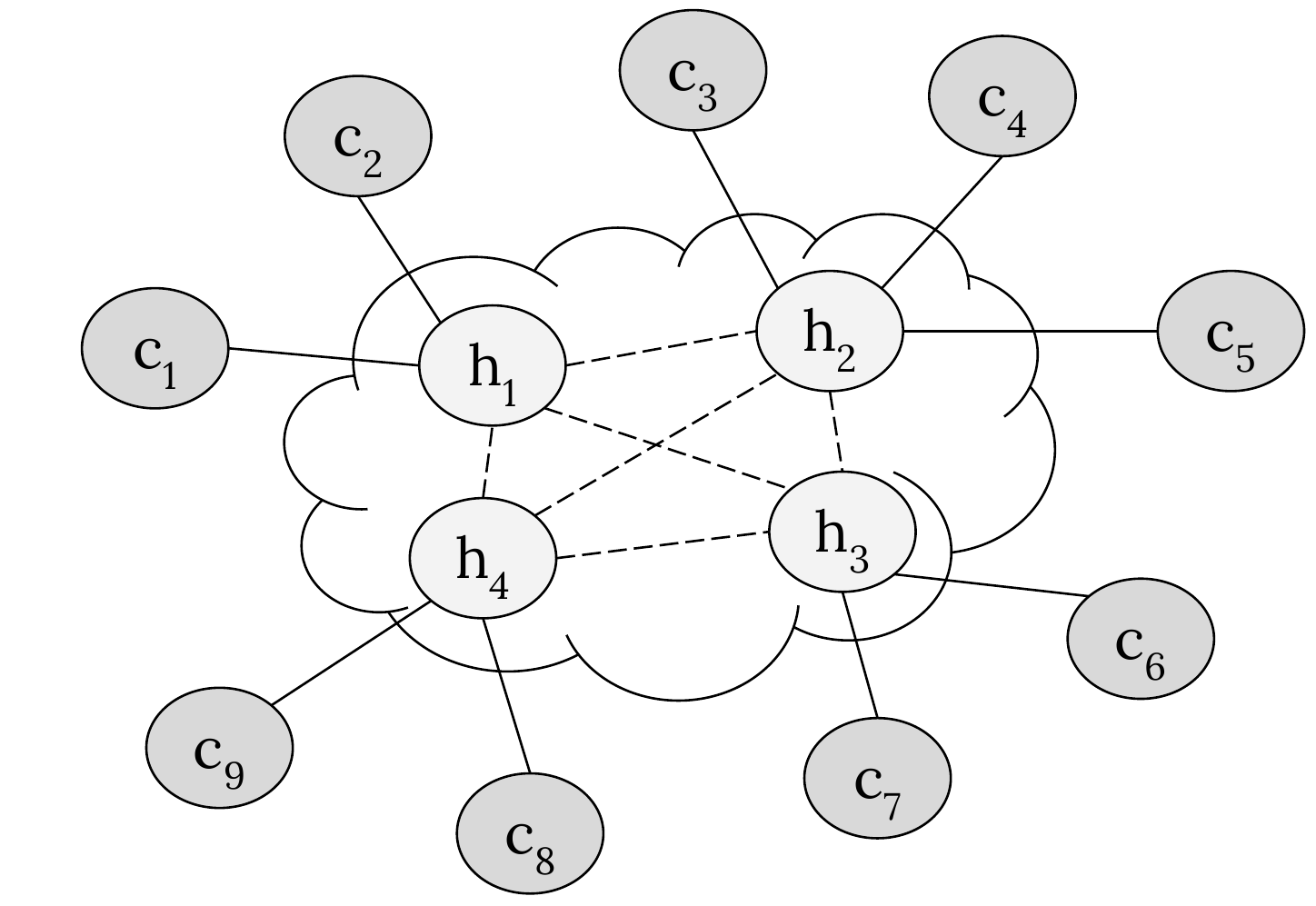}
    \caption{Example of the restricted PCN topology. Each client (nodes $\{c_1, \ldots, c_9\}$) is connected to one of hub nodes (nodes $\{h_1, \ldots, h_4\}$) and the hub nodes are well interconnected (potential edges are dashed). We propose a solution for hub connectivity in our protocol description.}
    \label{fig:PCNtopology}
\end{figure}

\subsection{Assumptions}
\label{sec:additional assumptions}
In this subsection, we introduce the cryptographic and network assumptions as well as the considered PCN topology.

\paragraph{Cryptographic assumptions.}
We assume the existence of secure communication channels between users.

\paragraph{Blockchain and network model.}
We assume the underlying block-chain satisfies persistence and liveness as defined in \cite{garay2017bitcoin}. We also assume a synchronous network model. That is, there is a known network delay which bounds the time needed for any user to receive any incoming message. 

\paragraph{Studied PCN topology} We restrict the PCN topology to allow for a computationally tractable implementation of transaction aggregation. Let $V = H\cup C$ be the set of all users, which we split to a relatively small set of hub nodes $H$ and the remainder set of client nodes $C$.
The hubs are well interconnected, and the client nodes are those connected to only one hub (Figure \ref{fig:PCNtopology}). 

\paragraph{Users and transactions.}
We assume the users of the protocol follow the protocol specification unless they can monetarily gain from deviating from it.

Over time, users accumulate transactions before initiating the protocol. The list $\CT_u$ of transactions that user $u$ submits must not exceed the capacity of the channel $u$ has with their hub $h_u$. In other words, users must not authorize a total payment of funds exceeding their balance with the hub. Moreover, the same must also hold for the set of incoming transactions where $u$ is the recipient.

\paragraph{Fee function.}
We assume that the forwarding fees charged by hubs for each channel are publicly known before the protocol execution. We also assume that the forwarding fee is a non-negative function $F(a,b)$ of the initial and final states $a, b$ of the channel and that the triangle inequality holds: for all states $a, b, c$: $F(a,c) \leq F(a,b) + F(b,c)$. As described in Section~\ref{subsec:pcn} above, the standard fee functions are simply a base fee plus a proportional fee, which do satisfy our more general assumptions. Intuitively, it should not be cheaper to forward a given amount along a channel in multiple parts compared to one.

Note that our proposed solution does not aim to minimize the fees, but rather to maximize the volume of transactions that are cleared. We nevertheless show that transaction aggregation implicitly reduces incurred fees for the above class of fee functions. We discuss a more sophisticated consideration of fees in Section~\ref{sec:discussion}.

\subsection{(Computational) Problem Definition}
\label{sec:compProbDef}

Let $G=(V,E)$ be a directed graph that models a PCN with integral edge capacities $c(e)$. Let $V = \set{v_1, \ldots v_n}, \len{E}=m$. 
A payment channel between $v_i$ and $v_j$ is modelled as two unidirectional edges $(v_i, v_j)$ and $(v_j, v_i)$.

A flow through this network is an $m$-dimensional vector $\vf = (\vf(e))_{e \in E}$ that satisfies the capacity constraints:
$$0 \leq \vf(e) \leq c(e) \quad \forall e \in E.$$

A demand vector $\vd = (d_1, \ldots d_n) \in \mathbb{Z}^n$ is a vector representing the monetary flow through each vertex.
For instance, a transaction of $1$ unit from $v_1$ to $v_2$ can be represented as the demand vector $(1, -1, 0, \ldots 0)$. 
A flow $\vf$ is said to route a demand vector $\vd$ if for every vertex $v$:
$$\sum\limits_{(v,u)\in E} \vf(v,u) - \sum\limits_{(u,v)\in E} \vf(u,v) = \vd(v)$$

A demand vector is feasible if there is a flow that routes it. For example, the demand vector $(1,-1,0,\ldots 0)$ representing a transaction of amount $1$ from $v_1$ to $v_2$ is feasible if, say, there is an edge (payment channel) from $v_1$ to $v_2$ of capacity at least $1$.

Consider a list of transactions $\CT = [\vt_1, \ldots \vt_k]$ each represented as a demand vector, i.e., $\vt_i$ is a demand vector with $w_i$ in the position of the sender, $-w_i$ in the position of the receiver, and zero otherwise, where $w_i$ is the transaction amount of $\vt_i$.
We denote by $|\vt_i| := w_i$ the amount of transaction $\vt_i$.
We define \textit{transaction aggregation} as the process of:\\
(i) algebraically adding up all the demand vectors in $\CT$, to obtain a demand vector $\vd = \sum\limits_{\vt \in \CT} \vt$,\\ 
(ii) computing a flow $\vf$ that routes $\vd$, and then\\ 
(iii) executing this flow on the payment channel network as a means of executing all the transactions in $\CT$.

\begin{wrapfigure}[20]{r}{0.5\textwidth}
    \centering
    \includegraphics[width = 0.4
    \textwidth]{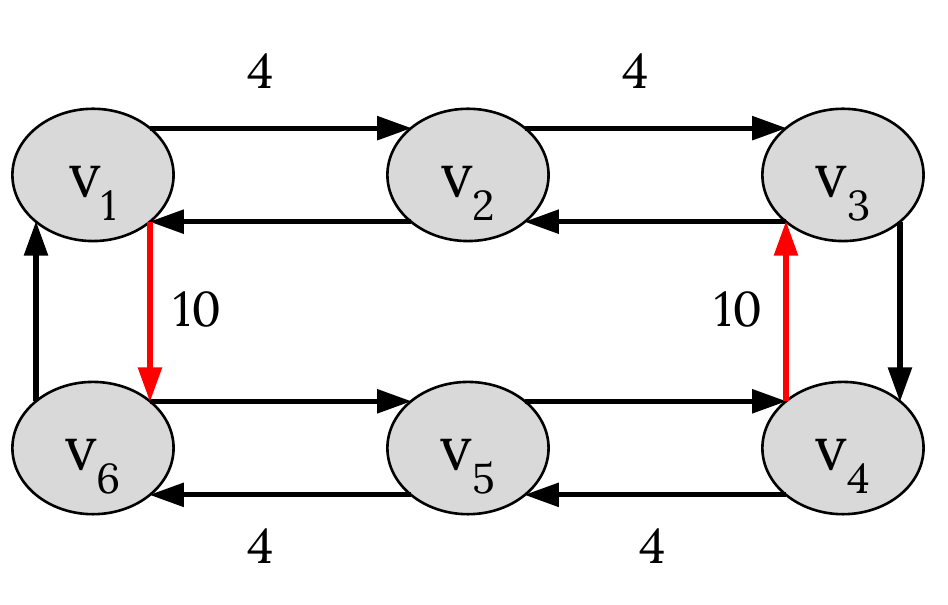}
    \caption{A transaction aggregation example. Non-depleted edges are annotated with their capacity (e.g. $c(v_2, v_1) = 0$, while $c(v_1, v_2) = 4$). Let $\CT = [(10,0,-10,0,0,0), (10,0,-10,0,0,0), (0,0,0,10,0,-10)]$, i.e., two transactions of 10 coins from $v_1$ to $v_3$ and one transaction of 10 coins from $v_4$ to $v_6$. These transactions are infeasible if executed sequentially. However, $\CT^* = [(10,0,-10,0,0,0), (0,0,0,10,0,-10)]$ can be aggregated to the demand vector $\vd = (10, 0, -10, 10, 0, -10)$, and executed by the flow $\vf$ with $\vf(v_1, v_6)= \vf(v_4, v_3)=10$ and $\vf(e)=0$ otherwise.}
    \label{fig:aggregationExamples}
\end{wrapfigure}

The computational challenge of transaction aggregation lies in finding a subset $\CT'$ whose aggregate demand vector can feasibly be routed, which we define below. 

\begin{definition}
\label{def:transactionAggr}
The Transaction Aggregation Problem on a directed graph $G$ concerns computing a sublist $\CT^* \subseteq \CT = \left[\vt_1, \ldots \vt_k\right]$ that can feasibly be aggregated, and which is optimal in terms of total throughput $\sum_{\vt_i \in \CT^*} |\vt_i|$.
\end{definition}

Next, we reformulate the Bank Clearing Problem~\cite{GuntzerJL98,ShafranskyD06} in our notation, noting it to be a special case of transaction aggregation. 

\begin{definition}
\label{def:bcp}
The Bank Clearing Problem with $n$ participants $\set{v_1, \ldots, v_n}$, each with current balance (also known as cover money) $\vb = (b_1, \ldots b_n)$ concerns computing a sublist $\CT^* \subseteq \CT = \left[\vt_1, \ldots \vt_k\right]$ that can be aggregated into a demand vector $\vd^*$ such that $\vd^* \leq \vb$, and which is optimal in terms of total throughput (also known as clearing volume) $\sum_{\vt_i \in \CT^*} |\vt_i|$.

\end{definition}

Let us briefly elaborate on the throughput benefits of transaction aggregation. First, it may be that the complete list $\CT$ cannot be aggregated into a feasible demand vector, but sublists of transactions can (Figure \ref{fig:aggregationExamples}). Figure \ref{fig:aggregationExamples} also illustrates that routing an aggregate may require less flow along channels, or use shorter paths, and thus incur lower fees.
Secondly, there is a varying degree of cancelling out of the demand vector $\vd$, depending on transactions in $\CT$. That is, transactions might be completely independent (payment paths are disjoint), or all of them can cancel out (e.g. $s$ and $r$ want to pay each other the same amount), leading to a smaller demand vector, or anything between these two extremes. This form of canceling out is beneficial regardless of channel capacities. We give an example in Figure \ref{fig:cancelout}.
Note that the cancelling out feature of transaction aggregation might render a list of transactions feasible, even though channel capacities do not suffice for sequential execution. 

In what follows, we will refer to oracles $\CO$ as algorithms that can solve the transaction aggregation problem, possibly under certain assumptions on $G$ and $\CT$, to return a feasible sublist $\CT^*$ that maximizes a linear function of $\CT$.
The transaction aggregation problem is generally computationally hard as we show in Section~\ref{sec:analysis}. 
\begin{wrapfigure}[24]{l}{0.5\textwidth}
    \centering
    \includegraphics[width=0.48\textwidth]{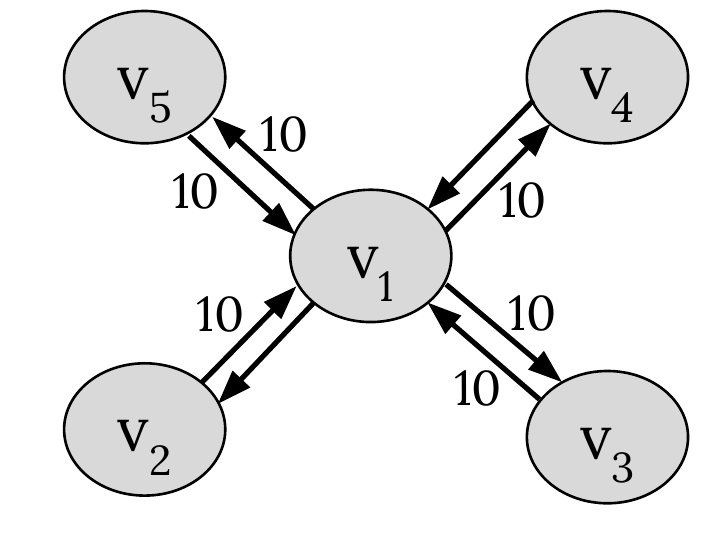}
    \caption{Examples of cancelling out of transactions. Capacities are annotated only on the edges relevant to the examples. We consider two examples: $\CT_1 = [(0,5,0,-5,0), (0, 0, 5, 0, -5)]$ and $\CT_2 = [(0,0,5,0,-5), (0,0,-2,0,2)]$. Transaction aggregation benefits $\CT_2$ by reducing total flow but not $\CT_1$. That is, aggregating $\CT_2$ produces a smaller demand vector $(0,0,3,0,-3)$, while aggregating $\CT_1$ yields $(0,5,5,-5,-5),$ which is equivalent to sequential execution.}
    \label{fig:cancelout}
\end{wrapfigure}

\subsection{Protocol overview}
\label{subsec:overview}

\begin{figure*}[t]
    \centering
    \includegraphics[scale=0.45]{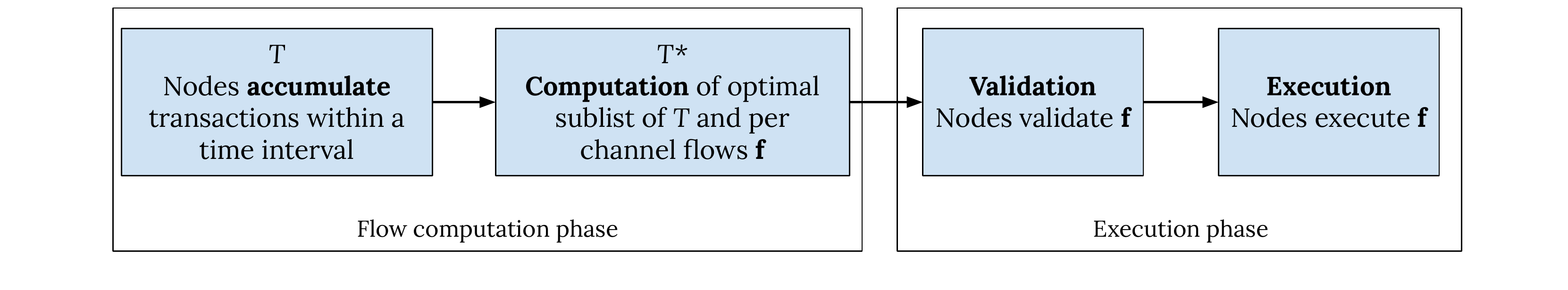}
    \caption{Skeleton protocol for transaction aggregation.}
    \label{fig:skelprot}
\end{figure*}

In this section, we introduce a protocol abstraction or the ``skeleton'' protocol for solving and executing the transaction aggregation problem in a payment channel network formulated as described in Section~\ref{sec:compProbDef} under our assumptions.

Our skeleton protocol consists of two phases: a flow computation phase and an execution phase. The goal of the flow computation phase is to privately compute and output an optimal transaction flow from the list of transaction requests submitted to the protocol. The goal of the execution phase is to execute the flow in an atomic fashion. 
An illustration of the flow of the skeleton protocol can be found in Figure~\ref{fig:skelprot}.

\paragraph{Flow computation phase} 
The flow computation phase initiates when a pre-specified amount of time elapses or when  sufficiently many transactions have been collected by the users\footnote{The decision of when the protocol triggers is left to the users. The core idea is to be able to accumulate many transactions per user may that be after some time interval or when a specific number of transactions per user is reached.}. 
In this phase, the list of transactions $\CT$ as well as the respective channel balances are taken as input. The output of the flow computation phase is a sublist of transactions $\CT^*$ that will be eventually executed in the PCN, along with a resultant flow $\vf$ that routes the aggregate of these transactions.

\paragraph{Execution phase}
Out of the computed solution $\CT^*$ and $\vf$, individual users $v$ are provided with the sublist of $\CT^*$ containing only those transactions where $v$ is a sender or recipient, and the flow $\vf(e)$ for edges $e$ adjacent to $v$. With this data, each user verifies the validity of the solution, and that balance security holds, i.e., the user does not lose any money given the resulting flow and the transactions to be executed.
After this verification, the users must execute $\vf$ atomically. The flow output from the previous flow computation phase might involve disconnected components of the network; therefore, simple HTLCs may not suffice to atomically execute the transactions $\CT^*$. Instead, a clever solution is needed where atomicity of all transactions in $\CT^*$ is maintained regardless of the graph topology. 

We will introduce the specific protocols that will be leveraged to implement the skeleton protocol described above in Section~\ref{sec:protocol}.

\subsection{Desired Properties}\label{subsec:properties}
In this section, we discuss the desiderata of our protocol \sys and then formally define the protocol goals.

First, we demand that our protocol is practical, i.e., that the solution of the transaction aggregation problem is computationally feasible.
Second, \sys should maintain security of the channel balances, meaning that the users should not lose money due to the protocol execution. In particular, the intermediaries (hubs) should not lose any money, while the users should only have a deficit equal to the sum of the amounts of the selected transactions in $\CT^*$. This security property is encapsulated by \textit{Balance Security}.

Furthermore, we demand two optimization properties: one that ensures that our solution transfers the maximum amount of value possible, termed \textit{Optimality}, and one that ensures our solution is cheaper (or equivalent at worse) than executing the same transactions sequentially in the PCN, termed \textit{Cost Efficiency}. The former property guarantees that \sys does not output a trivial solution and the solution maximizes the transaction throughput, while the latter guarantees that \sys is beneficial in terms of fees compared to the no-aggregation solution where each user executes its transactions sequentially and without the need for coordination on the PCN. 

Lastly, due to the potential aggregation of transactions per user, the resulting flow may enhance the privacy of the protocol compared to the simple sequential execution per user. 
For instance, if the transaction aggregation selects two transactions for a user that partially cancel out (e.g., the user receives 5 coins and sends 3 coins), the user's channel to the hub will only be updated by the difference (e.g., 5-3=2 coins); thus, the hub only learns the aggregated flow and not the individual transactions. 
To capture this improvement on privacy guarantees, we define the \textit{Privacy} property. 

In particular, we define the notion of privacy we wish to achieve during the entire course of our protocol, i.e.\ both the flow computation and execution phases. 
We want to enforce, firstly, that each uninvolved user should only learn that they are not involved in the protocol. 
Secondly, each involved user (whether client or hub) should only learn the flow output on each of the incident channels to the hubs.
We stress that the above two conditions imply value privacy: users will not know the flow (and thus the transaction amounts) along channels they are not a part of. 
Finally, we allow each involved user to learn the number of other involved users in the protocol \footnote{Ideally, involved users should not learn any information about the set of other involved users in the protocol. However, our proposed flow execution protocol requires the number of involved users to be known in advance. We are working on future work to bound the number of involved users each involved user needs to know.}. 

Formally, \sys should achieve the following properties:

\begin{definition}[Computational Feasibility]\label{def:efficient}
The transaction aggregation problem as per Definition \ref{def:transactionAggr} is fixed-parameter linear (\textsc{FPL} complexity class) \cite{downey2012parameterized}, i.e., its running time is polynomial in the number of clients and exponential in the number of hubs.
\end{definition}

\begin{definition}[Balance Security]\label{def:balsec}
The change in user balances, after the execution of our protocol, must be the aggregated demand vector of some sublist $\CT' \subset \CT$ of transactions. 
\end{definition}

\begin{definition}[Optimality]\label{def:optimality}
For a given list of transactions, the aggregation should select a maximal sublist in the sense of total demand fulfilled.
\end{definition}

\begin{definition}[Cost Efficiency]\label{def:costef}
\label{def:cost}

The total fees levied to users should be less than or equal to the amount that would have been levied had the transactions been processed sequentially.
\end{definition}

\begin{definition}[Privacy]\label{def:privacy}
Our notion of privacy is based on the following indistinguishability game between an passive adversary $\mathcal{A}$ and a transaction aggregation and execution protocol $\Pi$ when run on the subgraph of users interested 
in participating in the protocol $G=(V,E)$:

\begin{figure*}
    \centering
    \includegraphics[scale=0.45]{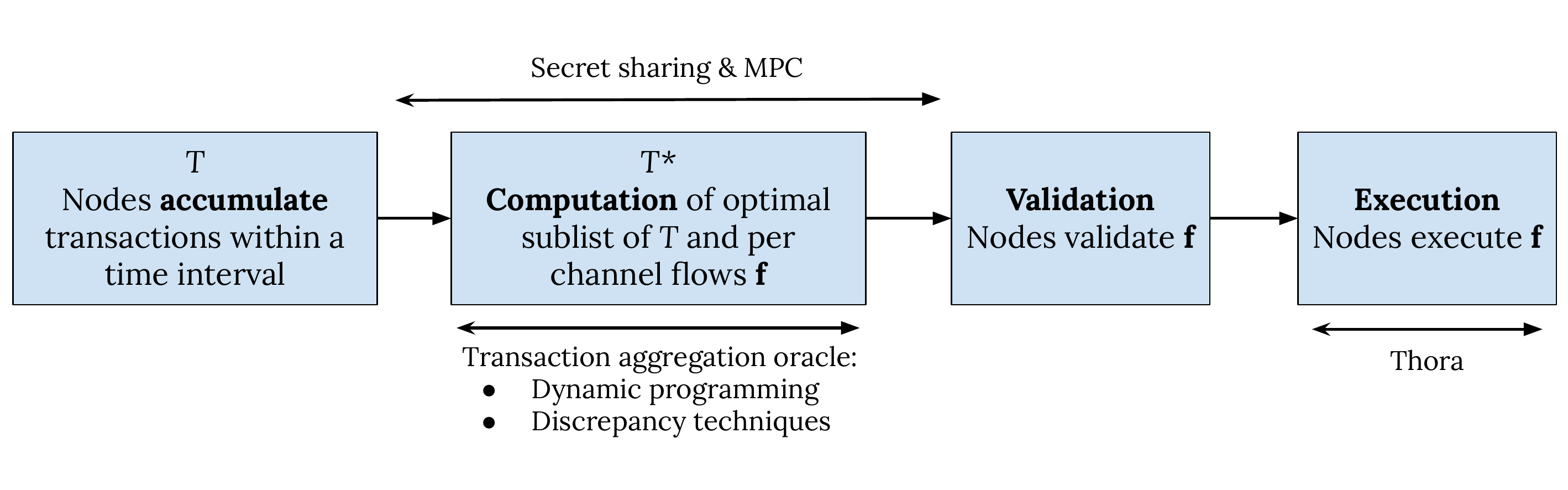}
    \vspace{-0.5cm}
    \caption{Proposed implementation of the skeleton protocol for transaction aggregation. Secret sharing is used before the computation module to select the delegates and share the inputs, MPC is used throughout the computation module and until the module's output is delivered to the nodes. The tools used for the computation and (atomic) execution modules are noted below those boxes.}
    \label{fig:skelprotimpl}
\end{figure*}

\begin{itemize}
    \item $\mathcal{A}$ chooses a subset of users $V^{\mathcal{A}} \subset V$ to corrupt, i.e., $\mathcal{A}$ gains access to the transcripts of every corrupted user $v \in V^{\mathcal{A}}$. 
    $V^{\mathcal{A}}$ can consist solely of hubs, clients, or a mix of both.
    Let $G^{\mathcal{A}}$ be the subgraph created by taking the union of all corrupted nodes and their incident edges. 
    Additionally, if a hub node is in $V^{\mathcal{A}}$, we add all the other hubs in the network and all their incident edges to $G^{\mathcal{A}}$.
    \item $\mathcal{A}$ chooses the following for $i \in \{0,1\}$: $\mathcal{A}$ creates a list of transactions tuples $\mathcal{T}^i = \left[(x_j, s_j, r_j)\right]_{j=1}^{n_i}$ where the $j$th transaction tuple consists of a transaction amount $x_j$, the vertex $s_j$ that requested the transaction, and the vertex $r_j$ that should receive the transaction. $n_i$ is the number of transaction tuples in the $i$th list.
    \item For every node and channel in $G^{\mathcal{A}}$, the following condition needs to hold: the resulting flow returned by $\Pi$ restricted to the subgraph $G^{\mathcal{A}}$ when run on $G$ and when given transaction tuple lists $\mathcal{T}^0$ and $\mathcal{T}^1$ has to be the same. 
    Additionally, the set of involved users as computed by $\Pi$ should be the same when given $\mathcal{T}^0$ or $\mathcal{T}^1$ as input. 
    \item (Challenge phase.) We choose a random bit $b \in \{0,1\}$ and run $\Pi$ with transaction tuple list $\mathcal{T}^b$.
    \item $\mathcal{A}$ gets the flow output sent to each corrupted user from $\Pi$.
    \item $\mathcal{A}$ outputs a bit $b'$. If $b'=b$, $\mathcal{A}$ wins the game.
\end{itemize}

We say a transaction aggregation and execution protocol $\Pi$ is $\epsilon$-private if $\mathcal{A}$ wins the above game with
probability at most $\frac{1}{2} + \epsilon$, and private if it is $\epsilon$-private for some negligible $\epsilon$.
\end{definition}

\section{\sys Architecture}\label{sec:protocol}
In this section, we present \sys, a protocol that solves the transaction aggregation problem under our system model. 
In Section \ref{sec:WiserTopology}, we present the topology over which \sys is designed to operate. This restriction is necessary to design a computationally tractable solution to the otherwise \NP-hard transaction aggregation problem (\NP-hardness is proved in Section \ref{sec:oracle}).
Then we present \sys, which is illustrated in an abstract level in Figure \ref{fig:skelprotimpl}. 

\sys consists of a flow computation phase (Section \ref{sec:computation}) and an execution phase (Section \ref{sec:execution}).
Nodes accumulate transactions within a fixed time interval and then secret share these transactions and their balances to all hub nodes, which in turn run a flow computation module. 
This module privately computes an optimal (in terms of maximum throughput) sublist of transactions to be executed and a flow that realizes these transactions.
\sys then proceeds to the execution phase. Each node locally validates the flow and selected transactions received by the computation module and then proceeds to atomically executing the flow. That is, either all flows are executed or none.
We summarize \sys in Algorithm \ref{alg:traggr} (Section \ref{sec:algdescr}).


\subsection{PCN topology \& Channel Factories}
\label{sec:WiserTopology}
We consider the restricted topology presented in Section \ref{sec:additional assumptions}. That is, the set of nodes is composed by a small set of hub nodes $H$ and a set client nodes, such that each client is connected to exactly one hub. 
This assumption ensures that transactions, and general demand vectors, can only be routed by unique flows in our PCN. In such a PCN, transaction aggregation problem is fixed parameter tractable (see \Cref{sec:oracle} for more details).
Moreover, we assume that the hub nodes are interconnected with a channel factory. 

Channel factories were first introduced as an intermediate construction between layer $1$ on-chain transactions and layer $2$ payment channel transactions\cite{burchert2017scalable}.
These factories are set up between $3$ or more individuals, and henceforth can be used to instantly open and close off-chain standard payment channels between any $2$ of the involved parties.

Two parties set up a conventional payment channel by committing some amount of their individual funds into a $2$-signature wallet considered as a shared account. In much the same way, $k$ parties may set up a channel factory among themselves by locking funds in a $k$-signature wallet. The internal working of a channel factory relies in fact on conventional payment channels, however these are not broadcast publicly nor included in the blockchain. Since the funds in the internal channels of the factory can be reallocated to new channels instantly, channel factories provide greater flexibility and scalability. 

A conventional payment channel can execute arbitrarily many transactions between two parties provided the resultant balance lies within the channel's capacity, and a channel factory can achieve the same functionality for $3$ or more parties. Formally, a channel factory between $r$ parties $v_1, \ldots v_r$ of current capacity $(c_1, c_2, \ldots c_r)$ (the total capacity is $C = c_1 + c_2 + \ldots c_r$ and is fixed) can route any demand vector $(d_1, \ldots d_r)$ provided $d_i \leq c_i, \forall 1 \leq i \leq r$. Another equivalent description of channel factories is that the transaction aggregation problem on $r$ parties connected by a channel factory reduces to the bank clearing problem on $r$ banks.

There is a natural generalization of fees from payment channels to channel factories, where a node $v_i$ would charge a fee for a certain state transition from $\vc = (c_1, c_2, \ldots c_r)$ to $\vc'= (c_1', c_2', \ldots c_r')$ if $c_i'< c_i$ ie. if their balance decreases. Our assumption on the fee function as stated in Section~\ref{sec:additional assumptions} generalized to the following: the total forwarding fee must be a non-negative function $F(\vc_0, \vc_1)$ of the initial and final states $\vc_0,\vc_1$ of the channel factory. $F$ must also satisfy the triangle inequality: for any states $\vc_0, \vc_1,\vc_2$, we have
$$F(\vc_0, \vc_2) \leq F(\vc_0,\vc_1) + F(\vc_1, \vc_2).$$

The improvements of channel factories come at the cost of minor liveness assumptions, namely, that the internal channels in a factory can only be reorganized when all involved parties are online. Although this assumption is generally impractical, it is a reasonable one to make in the setting of \sys. Only the hubs, and not the clients, are required to set up a channel factory. Based on their role as payment service providers, it is fair to expect them to be online regularly.


Channel factories are not crucial to our protocol, but they provide an increase in throughput compared to solely using payment channels. We recommend a channel factory between the hubs, as opposed to the hubs building various channels between each other for the following reason: suppose the hubs commit the same amount of capital to a set of payment channels between them to create a PCN $G_H$ between them. Any demand vector $\vd$ that can feasibly be routed in $G_H$ can also be routed by the channel factory. On the other hand, there are always demands $\vd$ that a given $G_H$ cannot route but the factory can. In this way, the throughput of the network is maximized for a given amount of capital per hub.

\begin{algorithm*}[t]
\caption{\sys~--~pseudocode for node $u$}
\label{alg:traggr}
\LinesNumberedHidden
\KwInput{$out(\CT,u)$
a set of outgoing transactions from node $u$, where $\CT := \cup_u out(\CT,u)$ and the balances of adjacent channels $\{c(u,w), c(w,u) \,|\, \text{$u$ and $w$ share a channel}\}$.}
\KwOutput{Result of atomic execution of a flow that realizes a sublist $out(\CT^*,u) \subseteq out(\CT,u)$, such that $\CT^* = \cup_u out(\CT^*,u)$ is optimal in terms of total throughput $\sum_{\vt\in\CT} |\vt|$.}
 \textbf{Macros:}\\ 
$\vf(u)$: denotes the entries of a flow $\vf$ referring to edges adjacent to $u$\;
$in(\CT,u)$, $out(\CT,u)$: the list of incoming, and respectively, outgoing transactions with respect to node $u$ in the transaction list $\CT$\;
\LinesNumbered
\setcounter{AlgoLine}{0}

\tcc{Flow computation phase}
\textit{secretSharingProtocol}$(out(\CT,u), \{c(u,w), c(w,u) \,|\, \text{ $u$ and $w$ share a channel}\})$\label{algl:secretsharing}\tcc*{share input to hubs}

\If{$u$ is a hub (i.e., MPC delegate)\label{algl:ifhub}}{
\tcc{Input validation}
\For{$v\in V\setminus H$\label{algl:inputvalidationstart}}{
\mylIf{$\sum_{\vt\in out(\CT,v)} |\vt| \geq c(v,h_v) $}{$\CT \leftarrow \CT \setminus out(\CT,v)$;\hfill\texttt{/* exclude txns violating outgoing capacity  */}}

\mylIf{$ \sum_{\vt\in in(\CT,v)} |\vt| \geq c(h_v,v)$}{ $ \CT \leftarrow \CT \setminus in(\CT,v)$; \label{algl:inputvalidationend}\hfill\texttt{/* exclude txns violating incoming capacity */}}
}
\textit{optimizationOracle()}\label{algl:oracle}\;
\Return $out(\CT^*,v)$, $\vf^*(v)$ to each node $v$\label{algl:MPCreturn}\;
}
 $out(\CT^*,u), \vf^*(u) \gets MPC()$\label{algl:MPCoutput}\tcc*{$u$ receives MPC output from hubs}

\tcc{Execution phase}
\mylIf{$\exists\vt \in out(\CT^*,u)$ s.t. $\vt \notin out(\CT,u)$}{abort; \hfill\texttt{/* validation of $out(\CT^*,u)$ */}\label{algl:validationstart}}

\mylIf{$\exists\vt \in in(\CT^*,u)\cup out(\CT^*,u)$ s.t. $\vt \notin in(\CT^*,v)\cup out(\CT^*,v) \land \vt$ is a transaction between $u$ and $v$}{abort;} 

\mylIf{$\vf^*(u)$ does not match the entries for $u$ in $\vf^*(v)$ s.t. transactions between $u$ and $v$ appear in $\CT^*$}{abort; }

\mylIf{$\sum\limits_{e=(v,u) \in E}\vf(e) - \sum\limits_{e=(u,v)\in E}\vf(e) \neq \vd(v)$}{abort \hfill \texttt{/* validation of $\vd$ vs $\vf$ */}\label{algl:validationend}}

$Thora(\vf^*(u))$\label{algl:thora}\tcc*{$u$ executes $\vf(u)$ by participating in Thora \cite{aumayr2022thora}}
\end{algorithm*}

\subsection{Flow computation phase}
\label{sec:computation}

The flow computation phase begins when sufficiently many transactions have been collected by the users, or when a pre-specified time has elapsed. 
In this phase, the transaction aggregation problem is solved. 
In \sys we aim to preserve the privacy of transactions. For this reason, we employ secret sharing and secure multi-party computation to solve the optimization problem.
The output of the flow computation phase is a list of transactions  $\CT^*$ and a flow $\vf$ to be executed in the next phase.
The two separate components of this phase are discussed below.

\paragraph{Secret Sharing the Transaction Aggregation Input}
With secret sharing we refer to protocols that allow a user to distribute a secret among a group of participants, each of whom is allocated a share of the secret. The secret can be then reconstructed but only when a sufficient number of shares are combined (threshold), meaning that a sufficient number of participants cooperate. Furthermore, the individual shares (or a subset of the shares with cardinality less than the threshold)  leak no information on the secret. 
When $n$ is the number of participants that get shares and $t$ is the reconstruction threshold, the protocol is called $(t,n)-$threshold secret sharing. 
\sys is agnostic to the selection of a specific secret sharing protocol, so any such scheme can be used such as Shamir's secret sharing~\cite{shamir1979how}.

In \sys, the list of transactions $\CT$ as well as the channel capacities in either direction are secret shared by the users, in order to perform a multi-party computation of the transaction aggregation problem. 
In particular, we assume the protocol specifies a specific block header from the underlying blockchain that provides common randomness to the users. 
Using this randomness, the users select $k$ delegates among the hubs, which will be the ones that will later execute the optimization oracle to provide a solution of the transaction aggregation problem to all users in a privacy-preserving manner.
The users then $(k,k)$-threshold secret share to the $k$ delegates their inputs, specifically their outgoing transactions and their channel balances. 
For privacy reasons, we demand that all users send transactions -- even if they are of zero value, when a user has only incoming transactions.

We stress that the $k$ delegates should be sampled from the set of hubs (as opposed to sampling from clients or a mixture of both) for the following reasons: 
firstly, as hubs are large business holders in the PCN, it becomes expensive, hence difficult, for an adversary to launch Sybil attacks to counterfeit them in order to control a disproportionate fraction of the delegates. 
Secondly, computing the solution to the transaction aggregation problem imposes resource and liveness constraints on the delegates: all delegates need to have sufficient computational power to perform the computation, and they need to be online throughout the computation process. 
It would therefore be more reasonable to assume that hubs, as large financial service providers in the PCN, indeed have the capacity to fulfil these requirements.


\paragraph{Input Validation} Once the transactions and constraints have been secret shared, the delegates should validate the well-posedness of the input. In particular, the delegates verify that no user submits outgoing transactions that would exceed the capacity of their channel with their hub, as assumed in Section~\ref{sec:additional assumptions}. 
 
\paragraph{Solving the Transaction Aggregation Problem}
Let $\CO$ denote an optimization oracle that, when given a PCN graph $G$ and a list of transactions $\CT$, returns a sublist $\CT^*$ along with a resultant flow $\vf$ that routes the aggregate of $\CT^*$. $\CT^*$ is optimal in the sense that the throughput $\sum_{\vt_i \in \CT^*}\limits |\vt_i|$ is maximized.
We elaborate on our proposed optimization oracle below.

As stated in Section \ref{sec:WiserTopology}, our PCN consists of hubs, connected by a channel factory, and clients connected to one hub only. The assumption on $\CT$ is that for each client $v$, their channel with the hub contains enough liquidity to route all transactions in $\CT$ where $v$ is the sender. The same channel must also have sufficient liquidity in the opposite direction to route all transactions where $v$ is the recipient.

Under these assumptions, the transaction aggregation problem reduces to an integer program with $\len{\CT}$ variables and $\len{H}$ constraints. Although solving such integer programs generally takes time exponential in $\len{\CT}$, we employ a recent result by Eisenbrand and Weismantel~\cite{eisenbrand} that solves integer programs in time linear in $\len{\CT}$, albeit exponential in $\len{H}$, i.e., the algorithm is fixed-parameter linear \cite{downey2012parameterized}. Their algorithm uses discrepancy techniques (which address the question of rearranging a sequence of vectors to limit the norm of their partial sums) to facilitate dynamic programming.
In Theorem \ref{thm:txagrr} of Section \ref{sec:oracle} we prove that the proposed optimization oracle solves the transaction aggregation problem and compute its computational complexity.

\subsection{Execution phase}
\label{sec:execution}

The flow computation phase concludes with a solution to the transaction aggregation problem, albeit encoded as secret shares. The solution consists of the list of accepted transactions $\CT^* \subseteq \CT$, along with a resultant flow $\vf$ on the network. An execution of this flow would be equivalent to a simultaneous execution of every transaction in $\CT^*$. However, the solution must first be checked for validity, and then be executed atomically though the entire network.
In the following, we discuss how the validation of the solution will be performed as well as the execution of the accepted transactions through the network in an atomic manner. 
We further propose a specific structure for the interconnection of the hubs, known as payment channel factories~\cite{burchert2017scalable}.

\paragraph{Flow Validation}
To check that a solution $(\CT^*, \vf)$ is valid is to check that $\CT^* \subseteq \CT$, that aggregation of $\CT^*$ yields a feasible demand vector $\vd = \sum\limits_{\vt \in \CT^*} t$, and that the flow $\vf$ routes $\vd$.

A trusted third party could easily verify the above, but we require users to verify these conditions \textit{locally}. We present a method for local verification, such that a solution is valid if and only if every user locally verifies the above conditions.

A user $v$ is supplied with a list $\CT^*_v \subset \CT^*$ such that a transaction $t \in \CT^*_v$ whenever $v$ is the sender or the recipient. Representing transactions as demand vectors, $\CT^*_v = \set{\vt \in \CT^*: \vt_v \neq 0}$. Moreover, $v$ is supplied with $\vf(e)$ for every edge $e$ adjacent to $v$ ie. $e=(u,v)$ or $e=(v,u)$.

The users can verify that the provided local data is consistent with that of other users: for every $t \in \CT$, the sender-recipient pair $u, v$ can verify that $t \in \CT^*_u$ if and only if $t \in \CT^*_v$. Similarly, every pair of adjacent vertices $u,v$ with $(u,v)\in E$ can verify they have both been provided the same value of $\vf(u,v)$.

$v$ verifies that every transaction in $\CT^*_v$ also belongs in $\CT_v$ and therefore in $\CT$. $v$ computes their aggregate demand $\vd(v)$ under $\CT^*_v$. To verify that $\vf$ routes $\CT^*$ locally, $v$ simply checks if the net flow of $\vf$ through $v$ is equal to $\vd(v)$:
$$\sum\limits_{e=(v,u) \in E}\vf(e) - \sum\limits_{e=(u,v)\in E}\vf(e) = \vd(v).$$

\paragraph{Atomic Execution}
After this verification, the users must execute $\vf$ atomically, i.e., 
ensuring that our protocol executes the set of payments atomically and efficiently after performing transaction aggregation, is critical for  security.
An incomplete update of channels does not typically correspond to any meaningful financial activity like the execution of a sublist of aggregated transactions. The flow output from the previous flow computation phase might involve disconnected components of the network. To ensure atomicity of channel updates across all disjoint paths, we employ the solution proposed in Thora \cite{aumayr2022thora} that ensures any number of disjoint channels can be atomically updated within a constant time interval. 


We first note that existing HTLC-based solutions like \cite{poon2015lightning,MalavoltaMKMR19,MazumdarR22} can guarantee atomicity of the channel updates only when the subgraph induced by the set of channels form a path. 
Moreover, these solutions require users along the path to lock their payment amount for a time linear in the length of the path. 
For these reasons, in \sys we leverage Thora, which is solution that performs the multi-channel updates atomically in constant time \cite{aumayr2022thora}.
Furthermore, Thora operates in any graph structure as it does not depend on the PCN topology or the connectivity among channels.

In more detail, let $\vf$ denote the flow in which all channels need to be updated atomically. 
Let us denote the support of the flow by $E^* = \{e_i = (s_i,r_i)\}_{i=1}^m$, where $s_i$ denotes the sender and $r_i$ denotes the receiver in the $i$th channel. 
In Thora, each receiver first creates and signs a special transaction, $\mathsf{tx^{ep}}$, which contains ``dummy'' outputs for every receiver $r_i, i \in [n]$ of value $\epsilon$ and requires a signature from each sender $s_i$. 
Each sender $s_i$ then creates a transaction to update the state of channel $e_i$ and another payment transaction for their corresponding receiver $r_i$. 
A payment transaction between $s_i$ and $r_i$ takes as input the output to $r_i$ in any $\mathsf{tx^{ep}}$, and must be spent by a timeout period $T$.
Thus, as long as any $\mathsf{tx^{ep}}$ is posted by any receiver to the ledger before a given timeout period of $T$, receivers can use their corresponding output in the $\mathsf{tx^{ep}}$ to ensure they get their payment and spend it.
If any receiver does not spend their payment transaction by the timeout period $T$, all senders will get a refund with the refund amount being the amount of the payment.
In this way, atomicity of channel updates is ensured.

In Thora, the timeout period $T$ is independent of $m$, the number of channels involved.
In particular, if all users are honest, all channel updates and payments can be instantaneous, while  security is guaranteed even in the presence of malicious adversaries.
We note that Thora guarantees value privacy with respect to channel balances, but requires all users involved to know the public keys of the other involved users. These public keys can nevertheless be made pseudonymous (see \Cref{discussion:privacy} for more details).


\subsection{Algorithm description}
\label{sec:algdescr}
We consolidate the description of \sys in Algorithm \ref{alg:traggr}. 
We present \sys from the point of view of a node $u$; be it client or hub.
The input to the protocol is the list of transactions $out(\CT,u)$ originating from $u$ and the balances of channels adjacent to $u$.
The output of \sys is the result of the atomic execution of a flow $\vf^*$ that realizes a sublist $\CT^*$ of $\CT$, which is optimal in terms of total throughput $\sum_{\vt\in\CT} |t|$. 

Lines \ref{algl:secretsharing}--\ref{algl:MPCoutput} consist of the flow computation phase.
In line \ref{algl:secretsharing} node $u$ secret shares its local input to the hubs (MPC delegates).
That is, the outgoing transactions (or a single zero transaction if there is no outgoing transaction) and the capacities of all adjacent channels. 
Then, the hub nodes run lines \ref{algl:ifhub}--\ref{algl:MPCoutput}.
They first check if the incoming and outgoing transactions submitted are feasible with the given channel capacities (lines \ref{algl:inputvalidationstart}--\ref{algl:inputvalidationend}) and exclude user inputs otherwise.
Subsequently, they run the optimization oracle (line \ref{algl:oracle}), and return the output to all nodes (line \ref{algl:MPCreturn}).
With the given output (line \ref{algl:MPCoutput}), node $u$ proceeds to the execution phase.
It first validates the output in lines \ref{algl:validationstart}--\ref{algl:validationend} and then participates in Thora with its local input $f^*(u)$ (line \ref{algl:thora}).

\section{Formal analysis}
\label{sec:analysis}
In this section, we prove that \sys satisfies the protocol goals.
We first discuss the complexity of the transaction aggregation problem, and show that the presented algorithm is fixed-parameter linear in the number of hubs.
Later, we demonstrate that \sys satisfies balance security, optimality and cost efficiency.
Lastly, we prove security with our indistinguishability game as per Definition~\ref{def:privacy}.

\subsection{Complexity of the Transaction Aggregation Problem}
\label{sec:oracle}

In section~\ref{sec:computation}, we referred to an optimization oracle $\CO$ to abstract out the algorithmic challenge of computing an optimal sublist $\CT^* \subset \CT$. Here, we will first show that the problem is NP-hard in general via reduction from the well-known subset sum problem. Then we will restrict attention to our specific graph and show that the problem is fixed-parameter linear. That is, we show that the computational complexity of the optimization oracle is linear in the number of transactions $\len{\CT}$ but exponential in $m$, the number of hubs.

\begin{theorem}
The Transaction Aggregation Problem is NP-hard.
\end{theorem}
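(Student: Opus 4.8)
The plan is to reduce from Subset Sum, which is \NP-complete: given positive integers $a_1, \ldots, a_n$ and a target $t$, decide whether some index set $S \subseteq \set{1, \ldots, n}$ satisfies $\sum_{i \in S} a_i = t$. I want to turn such an instance into a transaction aggregation instance whose optimum throughput reveals the answer, so that any algorithm solving transaction aggregation also decides Subset Sum.

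The construction is deliberately minimal. Take the graph $G$ on two vertices $s, r$ with a single directed edge $(s,r)$ of capacity $c(s,r) = t$ (and $c(r,s) = 0$). For each item $a_i$ create one transaction $\vt_i$ sending $a_i$ from $s$ to $r$, i.e.\ the demand vector with $a_i$ in the coordinate of $s$ and $-a_i$ in the coordinate of $r$, so that $|\vt_i| = a_i$. The whole construction is clearly polynomial in the input size.

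Next I would verify the correspondence. For any sublist indexed by $S$, the aggregate demand is $\vd = \sum_{i \in S} \vt_i$, which places $\sum_{i \in S} a_i$ at $s$ and its negation at $r$. Since the only edge leaving $s$ is $(s,r)$, the unique flow routing $\vd$ sets $\vf(s,r) = \sum_{i \in S} a_i$, and this flow is feasible precisely when $\sum_{i \in S} a_i \leq c(s,r) = t$. Consequently the throughput $\sum_{i \in S} |\vt_i| = \sum_{i \in S} a_i$ of any feasible sublist is at most $t$, and the transaction aggregation optimum equals $\max \set{\sum_{i \in S} a_i : \sum_{i \in S} a_i \leq t}$. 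This optimum equals $t$ if and only if some subset sums to exactly $t$, so an oracle for transaction aggregation decides the given Subset Sum instance; hence the problem is \NP-hard.

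I expect no serious obstacle here; the only point requiring care is the translation between the optimization flavour of transaction aggregation (maximise throughput subject to feasibility) and the exact decision flavour of Subset Sum, which is handled above by observing that the optimum hits $t$ exactly when a subset summing to $t$ exists. It is worth remarking that the reduction uses neither the cancelling-out phenomenon nor any nontrivial topology: already a single bottleneck edge makes the problem hard. This strengthens the statement and motivates the restricted, hub-based topology studied in the remainder of the paper, where tractability is recovered.
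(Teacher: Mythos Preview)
Your proof is correct and uses the same high-level strategy as the paper: reduce from Subset Sum on a two-vertex graph. The mechanics differ slightly. The paper takes $E = \emptyset$ and adds one extra ``reverse'' transaction $(-A,A)$ alongside the transactions $(a_i,-a_i)$; with no edges the only feasible demand vector is $\mathbf{0}$, so \emph{any} nonempty feasible sublist already solves the Subset Sum instance, and no appeal to optimality is needed. You instead keep a single edge of capacity $t$ and let the optimization objective do the work, arguing that the optimum throughput hits $t$ exactly when a subset sums to $t$. Both arguments are equally valid; the paper's variant sidesteps the optimization-to-decision translation you carry out, while yours has the minor advantage of using a nondegenerate PCN (one with an actual channel) and making explicit the bottleneck-capacity intuition you mention at the end.
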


\begin{proof}

In fact, the problem is NP-hard even for a graph on two vertices and no edges. Consider an instance of the Subset Sum problem, well-known to be NP-hard: given a positive integer $A$ and a set $\set{a_1, \ldots, a_k}$ of positive integers, find a subset of them that sum to $A$.

Let $G=(V,E)$ with $V = \set{v_1, v_2}, E = \emptyset$. Let $\CT = \{(a_1, -a_1)$, $(a_2, -a_2), \ldots (a_k, -a_k), (-A, A)\}$. Finding \textit{any} nonempty subset of $\CT$, let alone one optimal in terms of throughput, which can feasibly be aggregated, yields a solution of the Subset Sum problem.
\end{proof}

\paragraph{Remark:} Transaction aggregation coincides with BCP when the graph has only two vertices. Thus, the NP hardness (and in fact inapproximability) follows from the NP-hardness and inapproximability of BCP with two participants\cite{GuntzerJL98}. 

We now proceed to calculating the computational complexity of the optimization oracle (Section \ref{sec:computation}). 
Let us first state the relevant theorem from \cite{eisenbrand} before proceeding to apply it to our setting.


\begin{theorem}[Theorem 8 in \cite{eisenbrand}]
\label{thm:disc}
An integer program of the form\footnote{The integer program in \cite{eisenbrand} is presented in the so-called standard form of $\MA \vx = \vb$, whereas we use the inequality form $\MA \vx \leq \vb$ here for simplicity of exposition. Any integer program can be reformulated from inequality form to standard form by the introduction of slack variables, which doubles the number of variables. This is inconsequential in our case as our dependence on the number of variables is linear.}
$$\max \vw \Tr \vx \quad \text{ such that}$$
$$\MA \vx \leq \vb, 0 \leq \vx \leq \vu, \vx  \in \Z^k.$$
for $\vc \in \Z^k, \MA \in \Z^{h \times k}, \vb \in \Z^h, \vu \in \Z^k_{+}$,
can be solved in time $O(k(h\Delta)^{h^2})$, provided all entries of $\MA$ are bounded in absolute value by $\Delta$.
\end{theorem}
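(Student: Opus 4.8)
The plan: Theorem~\ref{thm:disc} is imported verbatim as ``Theorem 8 in \cite{eisenbrand},'' so within this paper I would treat it as a black box and simply cite it; its role here is purely as a tool, to be instantiated in the next step on the transaction-aggregation integer program (with $k=\len{\CT}$ variables, $h=\len{H}$ constraints, and $\Delta$ the largest transaction amount or capacity). If I nonetheless had to reconstruct the proof, the engine is the Steinitz lemma from discrepancy theory -- precisely the ``rearranging a sequence of vectors to limit the norm of their partial sums'' alluded to in the excerpt -- combined with dynamic programming.

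First I would put the program in standard form $\MA\vx=\vb$; the footnote already observes that slack variables only double $k$, which is harmless since the target running time is linear in $k$. I would then reinterpret a feasible integer point $\vx$ as a multiset of columns of $\MA$, where column $j$ is taken $x_j$ times subject to $0\le x_j\le u_j$, whose vector sum equals $\vb$. Maximizing $\vw\Tr\vx$ then becomes: among all such multisets summing to $\vb$, find one of maximum total weight.

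The crucial structural step is the Steinitz lemma. Each column of $\MA$ has $\ell_\infty$-norm at most $\Delta$, and the chosen columns sum to $\vb$; hence they can be reordered so that every prefix sum lies within $\ell_\infty$-distance $O(h\Delta)$ of the straight segment from $\vo$ to $\vb$. This confines the prefix sums of \emph{some} optimal solution to a thin tube around the ideal line. I would then run a dynamic program whose states record only the bounded deviation of a prefix sum from this line, storing for each the maximum weight with which it is reachable; a transition appends one column $\MA_j$ (respecting $\vu$), and sweeping the states in order of progress along the line while relaxing over all columns yields the optimum value at $\vb$. A careful accounting -- the Steinitz tube confines the search, and the attendant proximity bound of size $(h\Delta)^{O(h)}$ is what removes any dependence on $\length{\vb}$ -- is what ultimately yields the stated $O(k(h\Delta)^{h^2})$ running time, rather than a naive product of state count and per-state work.

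The main obstacle -- and the entire content of the Eisenbrand--Weismantel result -- is exactly this discrepancy step. Without the reordering, prefix sums of a feasible solution can wander arbitrarily far from the line, so the naive layered dynamic program has a state count governed by $\length{\vb}$ and is therefore only pseudo-polynomial. The Steinitz bound collapses this into a state space depending only on $h$ and $\Delta$, upgrading the algorithm from pseudo-polynomial to fixed-parameter linear in $k$ -- the property that, once this theorem is applied to transaction aggregation, will let me conclude that the optimization oracle $\CO$ runs in time linear in $\len{\CT}$ and exponential only in $\len{H}$.
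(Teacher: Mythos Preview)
Your proposal is correct: the paper does not prove Theorem~\ref{thm:disc} at all but simply imports it as a black box from \cite{eisenbrand}, exactly as you say in your first sentence. Your additional sketch of the Steinitz-lemma-plus-dynamic-programming argument is accurate bonus material that goes beyond what the paper contains, and it aligns with the paper's own one-line gloss (``discrepancy techniques \ldots\ to facilitate dynamic programming'').
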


\begin{theorem}\label{thm:txagrr}
The Transaction Aggregation problem of \sys can be solved in time $O(k(h\Delta)^{h^2})$,
where $\Delta$ is an upper bound on the demand of every transaction, $k$ is the number of transactions in $\CT$ and $h$ is the number of hubs.
\end{theorem}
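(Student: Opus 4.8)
The plan is to reduce the Transaction Aggregation problem on the \sys topology to a bounded integer program of exactly the form required by Theorem~\ref{thm:disc}, with one variable per transaction and one constraint per hub, and then to invoke that theorem directly.

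First I would encode candidate sublists by a binary indicator $x_i \in \set{0,1}$ for each transaction $\vt_i \in \CT$, where $x_i = 1$ means $\vt_i \in \CT^*$, and set $w_i := |\vt_i|$. Using the topology of Section~\ref{sec:WiserTopology} (each client is attached to a unique hub, and the hubs share a channel factory), I would define a matrix $\MA \in \Z^{h \times k}$ whose entry $\MA_{ji}$ equals $+w_i$ if the sender of $\vt_i$ is a client of hub $j$, equals $-w_i$ if the recipient of $\vt_i$ is a client of hub $j$, and is $0$ otherwise; in particular a transaction both of whose endpoints hang off the same hub yields a zero column and never affects any constraint. By construction $(\MA\vx)_j$ is precisely the net demand that the selected sublist places on hub $j$ inside the factory.

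Next I would translate feasibility of aggregating a sublist into the inequalities $\MA\vx \le \vb$. Because each client connects to a single hub, the flow routing a given aggregate demand is unique, so feasibility splits into the client--hub channel constraints and the factory constraint. By the assumption on $\CT$ in Section~\ref{sec:additional assumptions}, every client--hub channel already has enough liquidity in each direction to carry all of that client's outgoing (resp.\ incoming) transactions in the \emph{full} list $\CT$, hence a fortiori for any sublist; thus the client constraints are never binding and may be dropped. As recalled in Section~\ref{sec:WiserTopology}, a channel factory with capacities $(c_1,\ldots,c_h)$ routes a demand vector if and only if each coordinate is at most the corresponding capacity, the conservation condition $\sum_j (\MA\vx)_j = 0$ holding automatically since each transaction contributes $+w_i$ and $-w_i$. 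Writing $\vb = (c_1,\ldots,c_h)$, the Transaction Aggregation problem becomes
$$\max \sum_{i=1}^{k} w_i x_i \quad \text{such that} \quad \MA \vx \le \vb,\; 0 \le \vx \le \mathbf{1},\; \vx \in \Z^k.$$

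Finally I would apply Theorem~\ref{thm:disc} with $k$ variables, $h = \len{H}$ constraints, and all entries of $\MA$ lying in $\set{-w_i, 0, w_i}$ and hence bounded in absolute value by $\Delta := \max_i w_i$; this yields the claimed running time $O(k(h\Delta)^{h^2})$, and the optimal $\vx^*$ recovers $\CT^*$ (with a routing flow $\vf$ reconstructed from the per-hub demands). The main obstacle is the reduction itself, specifically justifying that two-sided routing feasibility collapses to the single family of upper-bound inequalities $\MA\vx \le \vb$: this rests on flow uniqueness in the tree-like client layer, on the Section~\ref{sec:additional assumptions} assumption to discard the client--hub constraints, and on the factory-equals-BCP equivalence to see that no lower-bound constraints are needed. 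Once feasibility is pinned to $\MA\vx \le \vb$ with entries bounded by $\Delta$, the complexity bound is immediate from Theorem~\ref{thm:disc}.
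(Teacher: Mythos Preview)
Your proposal is correct and follows essentially the same approach as the paper: encode sublists by a $\set{0,1}$-indicator vector, build the $h\times k$ matrix whose $(j,i)$ entry is $\pm|\vt_i|$ according to whether the sender or recipient of $\vt_i$ is attached to hub $j$, use the Section~\ref{sec:additional assumptions} assumption to drop the client--hub constraints, take $\vb$ to be the hubs' balances in the channel factory, and invoke Theorem~\ref{thm:disc}. If anything, you are slightly more explicit than the paper about why no lower-bound constraints on the factory are needed and about the same-hub case yielding a zero column.
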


\begin{proof}
It is sufficient to reformulate the transaction aggregation problem as one satisfying the hypothesis of Theorem~\ref{thm:disc} above.

Consider an arbitrary ordering of the transactions in $\CT = \{\vt_1, \ldots$, $\vt_k\}$. For a vector $x \in \set{0,1}^k$, let $\CT(\vx) = \set{\vt_i : \vx_i =1}$. We need to construct a matrix $\MA$ and vectors $\vw, \vb$ such that finding the optimal sublist $\CT^* \subset \CT$ is equivalent to finding a vector $\vx$ maximizing $\vw \Tr \vx$ subject to $\MA \vx = \vb$ and $0 \leq \vx \leq \vu$, where $\vu$ is the all ones vector.

The main idea behind the construction of $\MA$ is that, by our assumption on client-hub channels having sufficient capacity to route all transactions from $\CT$, we need only consider feasibility of the demand vector of a sublist $\CT'$ with respect to the channel factory between the hubs. 

So let us order the hubs as $h_1, h_2, \ldots, h_m$ and define $\vb = (b_1, b_2,\ldots$, $b_m)$ as: $b_i$ is the current balance of $h_i$ in the channel factory. Now, for every transaction $\vt_i \in \CT$ of demand $\len{\vt_i}$, we define $\MA_i$, the $i$th row of $\MA$ coordinate-wise as: $\MA_{i,j}  = \len{\vt_i}$ if the sender of $\vt_i$ is connected to the hub $h_j$. $\MA_{i,j} = - \len{\vt_i}$ if the recipient of $\vt_i$ is connected to hub $h_j$, and $0$ otherwise. We define $\vw$ as $\vw_i = \len{\vt_i}$.

The vector $\MA \vx$ represents the requisite flow in the channel factory for routing the aggregate of $\CT(\vx)$. Due to our assumption on the capacity of client-hub channels, this is sufficient for checking for the feasibility of the aggregate. Since $\vb$ represents the current state of the channel factory, the aggregate of $\CT(\vx)$ is feasible if and only if $\MA \vx \leq \vb$. We conclude the proof by observing that maximizing the objective $\max \vc \Tr \vx$ is equivalent to maximizing the throughput $\sum \len{\vt_i}$ of the aggregate.
\end{proof}

Theorem~\ref{thm:txagrr} immediately implies the following.
\begin{corollary}
\sys satisfies computational feasibility as in Definition~\ref{def:efficient}.
\end{corollary}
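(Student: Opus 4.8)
The plan is to obtain the claim as a direct corollary of Theorem~\ref{thm:txagrr} by matching its running-time bound to the three quantities named in Definition~\ref{def:efficient}, taking the number of hubs $h$ as the fixed parameter. Theorem~\ref{thm:txagrr} already solves the transaction aggregation problem on the \sys topology in time $O(k(h\Delta)^{h^2})$, so the only work is to reinterpret this bound in the language of the \FPL\ definition.

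First I would isolate the dependence on the transaction count $k$, which is linear. Under the mild modeling assumption that each client accumulates only polynomially many transactions before the protocol triggers (recall from Section~\ref{sec:computation} that every client additionally pads with a zero transaction, so $k \geq \len{C}$), the quantity $k$ is polynomial in the number of clients $\len{C}$. Linearity in $k$ therefore gives a running time polynomial in $\len{C}$, discharging the first half of Definition~\ref{def:efficient}. Next I would fold the residual factor $(h\Delta)^{h^2}$ into the ``exponential in the number of hubs'' clause: its exponent is $h^2$, so the factor is exponential in $h$, exactly the dependence \FPL\ permits for the parameter. Combining the two observations, the running time has the canonical \FPL\ shape $f(h)\cdot\mathrm{poly}(\len{C})$, which is precisely Computational Feasibility.

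The only step requiring genuine care is the role of $\Delta$, the maximum transaction demand. Since $\Delta$ appears in the base of the $(\cdot)^{h^2}$ term, the bound is polynomial in $\Delta$ of degree $h^2$, hence only pseudo-polynomial in the bit-length of the demands. I would make explicit that the clean \FPL\ reading holds when demands are bounded (or encoded in unary), and that otherwise the $\Delta^{h^2}$ contribution must be absorbed into the hub-exponential term rather than the client-polynomial one. Stating this assumption up front turns the informal ``immediately implies'' into a literal instantiation of Theorem~\ref{thm:txagrr}, and no further computation is needed.
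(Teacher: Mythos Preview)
Your approach is essentially the same as the paper's: both derive the corollary directly from Theorem~\ref{thm:txagrr}, and the paper's own justification is in fact nothing more than the one line ``Theorem~\ref{thm:txagrr} immediately implies the following.'' Your additional care in relating $k$ to $\len{C}$ and in flagging the pseudo-polynomial dependence on $\Delta$ is valid and more scrupulous than the paper itself, which glosses over both points.
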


\subsection{Balance Security}
Here, we show that the users of \sys do not lose any money through their participation in the protocol.

\begin{theorem}
\sys satisfies balance security as in Definition~\ref{def:balsec}.
\end{theorem}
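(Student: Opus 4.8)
The plan is to show that no honest user loses money by tracing through the two structural guarantees of \sys: the local validation performed in the execution phase (Algorithm~\ref{alg:traggr}, lines \ref{algl:validationstart}--\ref{algl:validationend}) and the atomicity of the Thora-based flow execution. By Definition~\ref{def:balsec}, balance security holds if the net change in each user's balance equals their entry in the aggregate demand vector of \emph{some} sublist $\CT' \subseteq \CT$. Since the flow computation phase outputs exactly such a sublist $\CT^*$ together with a flow $\vf^*$ routing $\vd = \sum_{\vt \in \CT^*} \vt$, it suffices to argue two things: first, that an honest user who does \emph{not} abort is guaranteed that the executed balance change equals $\vd(u)$; and second, that a user who would lose money necessarily aborts, which by atomicity of Thora cancels the entire execution so that \emph{no} channel is updated (recovering the trivial sublist $\CT' = \emptyset$).

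First I would formalize what each user verifies locally. User $u$ checks (i) that every selected transaction in $out(\CT^*,u)$ was genuinely submitted by $u$, ruling out forged outgoing payments; (ii) consistency of the selected transactions across each sender--recipient pair; (iii) that the reported flow $\vf^*(u)$ on incident edges matches the neighbor's view; and (iv) the conservation equation $\sum_{e=(v,u)} \vf(e) - \sum_{e=(u,v)} \vf(e) = \vd(v)$. The key observation is that condition (iv), combined with the cross-checks (ii) and (iii), forces the net balance change on $u$'s channels to equal precisely $\vd(u)$, the demand induced by $u$'s share of $\CT^*$. For a hub, $\vd(h)$ is determined by the channel-factory row of the flow; for a client, it is the aggregate of its own incoming and outgoing selected transactions against its single hub channel. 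Thus any user that passes all checks is provably changed by exactly its legitimate aggregate demand, so it loses nothing beyond authorized payments.

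Next I would invoke the atomicity guarantee of Thora: either every channel in the support of $\vf^*$ is updated, or none is. This is the crucial lever, because it converts a single honest user's local abort into a global cancellation. If any user detects an inconsistency (a forged transaction, a mismatched flow entry, or a violated conservation constraint) and aborts, then the Thora execution does not complete for that user, and by atomicity the entire multi-channel update is rolled back --- leaving all balances unchanged, which corresponds to the empty sublist $\CT' = \emptyset$ and trivially satisfies Definition~\ref{def:balsec}. Conversely, if the execution completes, every honest user must have passed validation, and by the previous paragraph each is changed by exactly $\vd(u)$ for the consistent global sublist $\CT^*$. I would also note that the input-validation step (lines \ref{algl:inputvalidationstart}--\ref{algl:inputvalidationend}) guarantees the aggregate demand never exceeds any client--hub channel capacity, so the required flow is actually executable and no user is driven into negative balance.

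The main obstacle I anticipate is arguing correctness in the presence of \emph{malicious} participants (including dishonest hubs running the MPC), since the protocol only assumes rational rather than honest behavior. The subtlety is that a corrupted hub could output an inconsistent $(\CT^*, \vf^*)$ to different users; the defense is that the local cross-checks (ii) and (iii) are pairwise and symmetric, so any inconsistency is detected by at least one honest endpoint of the affected channel or transaction, triggering an abort. I would therefore need to argue that the union of all honest users' local checks is equivalent to a global validity check --- i.e.\ that a globally invalid solution cannot pass every honest user's local verification without some honest party aborting --- and then lean on Thora's atomicity to ensure that one abort suffices to protect \emph{all} honest balances. Establishing this ``local checks imply global validity, and one abort protects everyone'' bridge is the heart of the proof; the remaining balance-conservation arithmetic is routine.
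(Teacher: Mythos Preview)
Your proposal is correct and rests on the same two pillars as the paper's own argument: (i) Thora's atomicity guarantee, and (ii) the resulting dichotomy that either the full flow routing $\CT^*$ is executed (so balances change by the aggregate of $\CT^*$) or nothing is executed (so balances are unchanged, corresponding to $\CT'=\emptyset$).

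The difference is one of scope rather than strategy. The paper's proof is a three-sentence appeal to Thora: it observes that only the execution phase moves money, invokes Thora's atomicity against rational adversaries, and immediately concludes the dichotomy. It does not explicitly invoke the local validation checks (lines~\ref{algl:validationstart}--\ref{algl:validationend}) or address the scenario you raise of a corrupted MPC delivering inconsistent $(\CT^*,\vf^*)$ to different users. Your proposal fills in precisely those gaps: you argue that the pairwise cross-checks ensure any inconsistency is detected by some honest endpoint, and that a single honest abort propagates via Thora's atomicity to a global cancellation. This is a genuine strengthening of the paper's reasoning; the paper implicitly relies on the validation step to ensure the executed flow really corresponds to a sublist of $\CT$, but does not spell out the ``local checks imply global validity'' bridge you identify as the heart of the matter. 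Both arguments land in the same place, but yours is the more complete of the two.
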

\begin{proof}
The flow execution phase of \sys is the only phase that involves monetary flow, hence requiring atomicity.
In \sys, flow execution is done using Thora, which guarantees atomicity of channel updates assuming rational adversaries (see \cite{aumayr2022thora} for more details). 
Thus, \sys also inherits atomicity against rational adversaries. As a result, either the atomic execution of a flow that routes our optimal list $\CT^*$ succeeds, implies that the change in user balances corresponds to the same sublist $\CT^* \subset \CT$. Or, the atomic execution failed, in which case the user balances are unchanged, corresponding to the empty sublist $\emptyset \subset \CT$.
\end{proof}

\subsection{Optimality \& Cost Efficiency}
Here, we first show that the selected set transaction is maximal in terms of throughput. Next, we show that the users of \sys may benefit from their participation in the system in terms of fees.
\begin{theorem}
\sys satisfies optimality as in Definition~\ref{def:optimality}.
\end{theorem}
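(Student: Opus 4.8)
The plan is to trace the optimality guarantee back to the correctness of the optimization oracle established in Theorem~\ref{thm:txagrr}, and to argue that neither the input-validation step nor the privacy-preserving MPC execution degrades the optimality of the computed solution. The statement to prove is that, for a given list of transactions, \sys selects a sublist $\CT^* \subseteq \CT$ maximizing the total throughput $\sum_{\vt_i \in \CT^*} |\vt_i|$ among all feasibly-aggregable sublists.

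First I would observe that Theorem~\ref{thm:txagrr} already shows the optimization oracle computes, in the restricted hub-and-client topology, a sublist $\CT^*$ that is optimal in terms of throughput: the reduction there establishes a bijection between feasibly-aggregable sublists and feasible integer points $\vx \in \{0,1\}^k$ of the program $\MA\vx \leq \vb$, under which the objective $\vw\Tr\vx$ equals the throughput $\sum_{\vt_i \in \CT(\vx)} |\vt_i|$. Hence the integer-program optimum returned by the algorithm of Eisenbrand--Weismantel~\cite{eisenbrand} corresponds exactly to a throughput-maximal sublist. So the core of the argument is that the oracle's output \emph{is} the optimum; optimality of \sys then reduces to showing that the oracle is faithfully applied to the correct instance.

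Next I would address the two places where \sys could in principle deviate from the abstract oracle. The input-validation step (lines~\ref{algl:inputvalidationstart}--\ref{algl:inputvalidationend} of Algorithm~\ref{alg:traggr}) discards only those transactions that individually violate the client--hub capacity assumption, i.e.\ transactions that could never appear in \emph{any} feasible sublist under our standing assumption on $\CT$; removing them therefore does not eliminate any candidate in the feasible set, so the optimum over the remaining instance coincides with the optimum over the original one. I would then note that the MPC execution computes exactly the same function as the oracle run in the clear---secure multi-party computation preserves functional correctness, only concealing intermediate values---so the selected $\CT^*$ and flow $\vf^*$ are identical to those a trusted party would compute. Combining these observations yields that the sublist executed by \sys is throughput-maximal, which is precisely Definition~\ref{def:optimality}.

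The main obstacle I anticipate is making the input-validation argument airtight: I must confirm that any transaction excluded by lines~\ref{algl:inputvalidationstart}--\ref{algl:inputvalidationend} is genuinely infeasible in every aggregable sublist rather than merely infeasible in isolation. Because a transaction that overshoots its own client--hub channel capacity cannot be compensated by the cancelling-out of other transactions on that particular edge (the edge is incident to a single client, so only that client's own transactions traverse it), such a transaction truly cannot belong to any feasible $\CT^*$; I would state this explicitly as the key lemma underlying the validation step. Given that, optimality follows directly from Theorem~\ref{thm:txagrr} together with the correctness of MPC.
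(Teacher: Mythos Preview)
Your core approach---deriving optimality from the correctness of the optimization oracle established in Theorem~\ref{thm:txagrr}---is exactly what the paper does, though the paper dispatches it in a single sentence (``follows immediately from the objective function of the transaction aggregation problem and the correctness of the optimization oracle''). Your additional observations about MPC preserving functional correctness are sound and arguably make the argument more complete than the paper's.

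Your treatment of the input-validation step, however, contains a genuine error. You describe the validation as discarding ``those transactions that individually violate the client--hub capacity assumption,'' but lines~\ref{algl:inputvalidationstart}--\ref{algl:inputvalidationend} actually remove \emph{all} outgoing (resp.\ incoming) transactions of a client $v$ whenever their \emph{sum} exceeds $c(v,h_v)$ (resp.\ $c(h_v,v)$). Your proposed ``key lemma''---that an excluded transaction cannot belong to any feasible sublist because no cancelling-out occurs on the client--hub edge---is therefore aimed at the wrong target, and it is also false as stated: a client who both sends and receives can have net demand on the edge smaller than the sum of outgoing amounts, so a transaction contributing to an over-capacity sum may still appear in a feasible aggregate. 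Hence, if the validation ever triggered, it could in fact remove throughput-optimal candidates.

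The correct way to close this gap is simpler: under the standing assumption on users in Section~\ref{sec:additional assumptions} (each client's submitted outgoing list, and likewise the incoming list, does not exceed the channel capacity), the validation conditions are never met, so lines~\ref{algl:inputvalidationstart}--\ref{algl:inputvalidationend} are a no-op and the oracle is applied to the original $\CT$. State this explicitly instead of the flawed lemma, and the rest of your argument goes through.
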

\begin{proof}
This follows immediately from the objective function of the transaction aggregation  problem (Definition~\ref{def:transactionAggr}) and the correctness of the optimization oracle (Theorem~\ref{thm:txagrr}).
\end{proof}

\begin{theorem}
\sys satisfies cost efficiency as in Definition~\ref{def:costef}.
\end{theorem}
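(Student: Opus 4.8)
The plan is to show that the fees incurred by aggregating the selected transactions $\CT^*$ never exceed the fees that would be paid by routing those same transactions sequentially, one at a time, through the PCN. The key structural fact I would exploit is the restricted topology: all cross-hub traffic passes through the single channel factory among the hubs, and each client connects to exactly one hub. Consequently, the total fee in either scenario decomposes additively over the hubs (and over the client-to-hub channels), so it suffices to compare, hub by hub, the fee of the aggregated state transition against the sum of the per-transaction fees. I would therefore reduce Cost Efficiency to a statement purely about the fee function $F$ on a single channel (or on the channel factory), ignoring which individual transactions caused which movement.

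First I would fix a hub $h$ and consider its state in the channel factory. In the sequential execution, the transactions in $\CT^*$ touching $h$ are applied one after another, driving the factory through a sequence of intermediate states $\vc_0 \to \vc_1 \to \cdots \to \vc_\ell$, where $\vc_0$ is the initial state and $\vc_\ell$ is the final state after all of $\CT^*$ has been applied. The total sequential fee charged is $\sum_{i=1}^{\ell} F(\vc_{i-1}, \vc_i)$. In the aggregated execution, the factory moves directly from $\vc_0$ to $\vc_\ell$ in a single state transition, incurring fee $F(\vc_0, \vc_\ell)$. The triangle inequality assumed on $F$ in Section~\ref{sec:additional assumptions} (generalized to channel factories in Section~\ref{sec:WiserTopology}) gives immediately
$$F(\vc_0, \vc_\ell) \leq \sum_{i=1}^{\ell} F(\vc_{i-1}, \vc_i),$$
which is precisely the desired per-hub inequality. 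The same argument applies verbatim to each client-to-hub payment channel, using the original (two-party) triangle inequality on $F$.

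The main obstacle, and the point I would take care to argue cleanly, is that the aggregated flow $\vf$ and the sequential execution genuinely share the same endpoints $\vc_0$ and $\vc_\ell$ at every channel and every hub. This requires observing that, because the topology forces a \emph{unique} flow routing any given demand vector (as noted in Section~\ref{sec:WiserTopology}), the net change at each channel is determined by the aggregate demand vector $\vd = \sum_{\vt \in \CT^*} \vt$ alone, independently of ordering or of whether transactions partially cancel. Thus the final factory state $\vc_\ell$ reached by routing $\vd$ in one shot coincides with the state reached by applying the transactions sequentially, and the telescoping endpoints match. Summing the per-channel and per-hub inequalities over all hubs and all client channels then yields the global bound, establishing that the total aggregated fee is at most the total sequential fee, which is exactly Cost Efficiency as in Definition~\ref{def:costef}. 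I would close by noting that when transactions cancel out (as in Figure~\ref{fig:cancelout}), the aggregate net movement is strictly smaller, so the inequality is often strict, reinforcing that aggregation is beneficial rather than merely non-harmful.
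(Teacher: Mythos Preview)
Your proposal is correct and follows essentially the same approach as the paper: decompose the total fee over individual channels and the channel factory, invoke the triangle inequality on $F$ to compare the single aggregated state transition against the telescoping sum of sequential transitions, and use uniqueness of routing to ensure the endpoints coincide. The paper additionally notes the trivial case where sequential execution is infeasible while the aggregate is; your phrasing ``fix a hub $h$'' is slightly off since the factory is a single shared object rather than analyzed hub by hub, but the substance of your argument matches the paper's.
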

\begin{proof}
We wish to prove that total fee levied to users should be no greater than the fees that would have been levied had the transactions been processed sequentially. Although \sys routes an optimal sublist $\CT^*$, we prove the above for any list $\CT $ of transactions that can feasibly be aggregated. The main idea of this proof is that forwarding fees satisfy the triangle inequality, and that transactions can only be routed along unique paths (Section~\ref{sec:additional assumptions}).

Suppose $\CT =  [\vt_1, \ldots \vt_k]$ and the flow $\vf$ routes $\CT$. Note that a sequential execution of transactions may not be possible although the atomic execution of the aggregate (this is in fact one of the benefits of transaction aggregation), in which cost efficiency holds trivially. Suppose that a sequential execution of transactions is indeed possible, and without loss of generality assume this order is $\vt_1, \vt_2 \ldots \vt_k$. 

Consider any channel or channel factory $C$ in our PCN, and let $a_0$ be its initial state. Suppose the atomic execution of $\vf$ leaves $C$ in the final state $a_f$, and this incurs fee $F = F(a_0, a_f)$. Next, suppose the sequential execution of transactions takes $C$ along states $a_0, a_1, \ldots a_k = a_f$, so that the routing of $\vt_i$ involves a state transition from $a_{i-1}$ to $a_i$ and costs $F(a_{i-1}, a_i)$. Since the fee function satisfies the triangle inequality,
$$F(a_0, a_f) = F(a_0, a_k) \leq \sum\limits_{i=1}^k F(a_{i-1}, a_i).$$
Observing that the total fees incurred is simply the sum of forwarding fees per channel, this establishes cost efficiency.
\end{proof}

\subsection{Privacy}
Our transaction aggregation solution as well as the  execution protocol 
satisfies our privacy notion, i.e.,
\begin{theorem}
\sys satisfies the privacy notion from Definition \ref{def:privacy} (assuming the users running the MPC protocol satisfy the trust assumptions required by the underlying MPC protocol).
\end{theorem}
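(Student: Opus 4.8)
The plan is to reduce the claimed privacy of \sys to the privacy guarantees of its two cryptographic building blocks: the secret-sharing / MPC used in the flow computation phase, and Thora used in the execution phase. The overall strategy is a simulation or indistinguishability argument that follows the two-phase structure of the protocol. First I would fix an adversary $\mathcal{A}$ satisfying the constraints of the game in Definition~\ref{def:privacy}, recalling in particular the key admissibility condition the adversary must respect: for the two transaction lists $\mathcal{T}^0$ and $\mathcal{T}^1$ it submits, the resulting flow $\vf$ restricted to the corrupted subgraph $G^{\mathcal{A}}$ must be identical, and the set of involved users must be identical under both inputs. This condition is exactly what makes the game non-trivial and what the proof must exploit: it guarantees that the only data the corrupted parties are \emph{entitled} to observe (their incident flow values and the count of involved users) is by construction the same for $b=0$ and $b=1$, so the adversary's advantage can only come from leakage in the protocol machinery itself.

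Next I would argue phase by phase that no such extra leakage occurs. For the flow computation phase, each honest user $(k,k)$-secret-shares its outgoing transactions and channel balances to the $k$ hub delegates, and the optimization oracle is evaluated under MPC. I would invoke the privacy property of the underlying $(k,k)$-threshold secret-sharing scheme: any set of shares below the threshold reveals no information about the shared secret, so a corrupted party's view of the shares it receives is (perfectly or statistically) independent of the actual transaction values. Under the stated trust assumption that the parties running the MPC respect the MPC protocol's trust model, the intermediate computation transcript reveals nothing beyond the prescribed output, which is precisely the per-user sublist $out(\CT^*,v)$ and incident flow $\vf^*(v)$. Since the admissibility condition forces these outputs to coincide on $G^{\mathcal{A}}$ across $\mathcal{T}^0$ and $\mathcal{T}^1$, the adversary's view of the entire first phase is indistinguishable between the two challenges up to the MPC's simulation error.

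For the execution phase I would rely on the value-privacy property of Thora stated earlier in the excerpt: Thora guarantees privacy of channel balances and exposes only the public keys of the involved users and the flow on incident channels. Again, on the corrupted subgraph these incident flow values are identical for both inputs by the admissibility condition, and the number of involved users (which the game explicitly permits the adversary to learn) is also equal; so Thora's transcript on $G^{\mathcal{A}}$ carries no distinguishing information. I would then compose the two phases, bounding the adversary's total advantage $\epsilon$ by the sum of the secret-sharing leakage (negligible or zero), the MPC simulation error, and Thora's privacy error, concluding that $\mathcal{A}$ wins with probability at most $\tfrac12 + \epsilon$ for negligible $\epsilon$.

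The main obstacle I anticipate is not any single phase but the \emph{composition} and the precise role of the admissibility condition. Each building block only promises that it leaks nothing beyond its designated output, so the proof hinges on showing that the \emph{conjunction} of these designated outputs — restricted to $G^{\mathcal{A}}$ — is forced to be identical across the two worlds by the game's constraints, and that nothing is revealed in the hand-off between phases (e.g., timing, abort patterns, or the delegate-selection randomness which is derived from a public block header and hence identical in both worlds). I would need to argue carefully that the special structure added to $G^{\mathcal{A}}$ when a hub is corrupted (the game adds all hubs and their incident edges) does not create a path by which a corrupted hub infers individual transaction amounts rather than only aggregated flows; this is where the ``cancelling out'' aggregation and the reliance on the MPC hiding the per-transaction decomposition are essential, and it is the step most likely to require the explicit trust assumption flagged in the theorem statement.
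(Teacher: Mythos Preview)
Your proposal is correct and follows essentially the same argument as the paper: exploit the admissibility condition on $\mathcal{T}^0,\mathcal{T}^1$ to force the legitimate outputs on $G^{\mathcal{A}}$ to coincide, invoke MPC privacy for the flow computation phase, invoke Thora's value-privacy for the execution phase, and conclude indistinguishability. The only stylistic difference is that the paper first argues with an ideal trusted third party computing the flow and then replaces it by MPC, whereas you appeal directly to the MPC simulation guarantee; your additional remarks on composition, delegate-selection randomness, and the enlarged $G^{\mathcal{A}}$ when a hub is corrupted go somewhat beyond what the paper spells out but are consistent with it.
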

\begin{proof}
Consider the indistinguishability game as defined in \Cref{def:privacy}. 
Let us assume during the challenge phase of the game there exists a trusted third party (TTP) that computes a flow $\vf_b$ from the challenge transaction tuple list $\mathcal{T}^b$. 
Suppose additionally that the computed flow satisfies all the constraints as detailed in \Cref{def:privacy}. 
The TTP subsequently lets each user in $V$ know how much they need to send or receive on each of their incident edges. 

Each corrupted user learns, from the TTP, only the amount they need to send or receive on each of their incident channels.
Since the flow restricted on $G^{\mathcal{A}}$ has to be the same when the protocol is run on either $\mathcal{T}^0$ or $\mathcal{T}^1$, the joint view of the adversarial nodes when the protocol is run on $\mathcal{T}^0$ or $\mathcal{T}^1$ is identical (and thus trivially indistinguishable).

Now we can replace the above assumption of a TTP with an MPC protocol 
that computes the same functionality with the same privacy guarantees. If the parties running the MPC protocol satisfy the trust assumption of the underlying protocol, privacy is preserved. A natural choice would be to use a computationally (rather than information theoretically) secure MPC protocol (so we have privacy unless all the users are corrupted), and (for efficiency reasons) pick a small random subset of the users to run the MPC protocol after receiving the secret-shared inputs from all the participating users. This way privacy is preserved unless the entire randomly picked subset of users has been corrupted by the adversary.

The execution phase of our protocol uses the Thora protocol as a black box, and Thora guarantees that payment values along any channel that needs to be updated are not known to anyone except the channel owners.
This again ensures that involved users only know the payments and hence flow on their incident channels.
Thora leaks the set of involved users, but they can be made pseudonymous using ephemeral keys (see \Cref{discussion:privacy}). 
Thus, the only information involved users gain is the number of all other involved users, which is allowed in our privacy definition.

Since we restrict the set of involved users to be exactly the same when the protocol computes the flow given transaction tuples $\mathcal{T}^0$ and $\mathcal{T}^1$, the view of an adversarial node when the execution phase of the protocol is run on $\mathcal{T}^0$ or $\mathcal{T}^1$ is again indistinguishable, hence the execution phase is private.
\end{proof}

\section{Discussion \& Extensions}\label{sec:discussion}
In this section, we first discuss the \sys architecture and its various components. Then, we elaborate on our privacy definition and potential extensions. Finally, we revisit the assumptions of this work and in particular the fees requested for routing by the hubs.

\subsection{\sys modularity}
\sys implements the skeleton protocol described in Section~\ref{subsec:overview}. However, the specific components of the skeleton protocol are modular, in the sense that they can be replaced with other protocols that maintain the same guarantees. 

As mentioned previously, there are various suitable MPC frameworks we can use. As an active area of research, we expect improvements on the side of secret sharing and MPC protocols, which would directly apply to \sys. The delegates conducting the MPC can also be chosen from any set of parties that have sufficient computing power and strong identities. The MPC may even be entirely replaced by a semi-trusted third party if the application does not require privacy of transactions and channel balances.

The rest of the components are also replaceable. Executing the flow derived from aggregating transactions can be implemented by any protocol, like Thora, that provides atomicity of channel updates.

In \sys we assume the hubs have created an on-chain channel factory. Although this construction benefits the protocol in terms of throughput, lower liquidity requirements as well as runtime, this structure is by no means restrictive or necessary for our protocol execution. Hubs that are already connected via traditional payment channels in an arbitrary graph structure can implement transaction aggregation without paying for constructing channel factories.

Most importantly, the optimization oracle is also replaceable. In Section~\ref{sec:netting} below, we demonstrate the equivalence of BCP, the Bank Clearing Problem and transaction aggregation on channel factories. Due to the modularity of \sys, we can use any of the algorithms developed for BCP as our optimization oracles. We also discuss applicable algorithms below. 

\subsection{Connection to Netting}
\label{sec:netting}

As observed in Section~\ref{sec:WiserTopology}, the use of channel factories between $h$ hubs with capacities $\vc = (c_1, c_2, \ldots c_r)$ allows for routing  of any demand vector $\vd = (d_1, d_2, \ldots d_r)$ so long as $\vd \leq \vc$. Observe that this feasibility constraint is identical to that of the Bank Clearing Problem when $h$ participants have capital (or cover money) $\vc$. Since both transaction aggregation and BCP aim to maximize throughput or clearing volume, the problem of transaction aggregation on a channel factory is equivalent to BCP.

On one hand, this means our fixed-parameter linear algorithm from Section~\ref{sec:analysis} also shows that BCP is fixed-parameter linear in the parameter $(h\Delta)$. This linear dependence on the number of transactions is not only best possible (since simply reading a list of $k$ transactions requires $O(k)$ time), but also improves upon the previous known polynomial dependence on $k$.

Although the algorithm we present in Section~\ref{sec:oracle} enjoys the best known asymptotic complexity, other, simpler algorithms such as the pseudopolynomial time dynamic programming solution of \cite{GuntzerJL98} may be faster in certain parameter regimes. Moreover, although BCP has been shown to be inapproximable unless P = NP, \cite{ShafranskyD06} propose a fast heuristic for approximately solving it, which may also be employed here to achieve even faster implementations at the cost of optimality.

\stnote{
The computational problem of txn agg  is however more general since feasibility of the demand vector is constrained by payment channels and routing of flow on PCNs.}

\subsection{Privacy}\label{discussion:privacy}
Here we discuss two privacy issues regarding our protocol. 

\paragraph{Revealing the full set of involved users.}
As briefly mentioned in \Cref{def:privacy}, we use Thora to execute the computed flow atomically, and this leaks the full set of involved users. 
This is because every receiver in Thora has to create a special transaction $\mathsf{tx^{ep}}$ that contains outputs for every other receiver and signed with a signature from each sender. Not only is it possible, but it is also good practice, to use fresh ephemeral public keys (also known in the literature as receive addresses) when receiving tokens in UTXO-based cryptocurrencies.
If all users employ fresh receive addresses each time they participate in our protocol, the only information leaked would be the number of involved users. We consider this a non-issue, as the set of involved users is not revealed, but only the size of this set. 

\paragraph{Hiding transaction requests from hubs.}
In \sys, users secret share their transaction requests and the transaction aggregation problem is solved using MPC. Thus, the flow computation stage 
conceals transactional data.

In the execution phase, the hubs see the aggregated flow output on all their incident channels, but they cannot confidently determine sender-recipient pairs or demands since the number of transactions is not known to them. 
This gives users an additional potential privacy benefit when using our protocol as compared to when the hubs sequentially execute transaction requests. 

\subsection{Optimizing for Fees}

The algorithm presented in Section~\ref{sec:oracle} seeks a feasible sublist of transactions that maximizes throughput without considering the fees incurred in the execution of flow. As a result, we can only guarantee the weak notion of cost efficiency as given in Definition~\ref{def:cost}: that the fee is no greater than the case without transaction aggregation at all ie. had the transactions been processed sequentially.

\sys can in fact achieve greater cost efficiency by modifying the optimization oracle. In fact, only the objective function, and not the algorithm itself, must be modified to a sum of throughput and total fees incurred, with a certain trade-off factor relating the importance of these two objectives. By leaving the constraints of the integer program unchanged, this extension enjoys the same runtime as given by Theorem~\ref{thm:txagrr}.

\stnote{todo for camera-ready version. Explain in more detail that since the cost/fee is a function of resultant flow we don't make it any harder to optimize. }


\section{Related work}\label{sec:related}

Payment channel networks have received much attention recently as a promising way to increase the limited transaction throughput of blockchains. Payment channels were originally introduced by Spilman~\cite{spilman2013channels}, but the first bidirectional constructions followed later with the Bitcoin Lightning Network~\cite{poon2015lightning} and the Duplex Micropayment Channels~\cite{decker2015fast}. There is currently a flurry of payment channel proposals each optimizing a different aspect~\cite{avarikioti2019brick,mccorry2019pisa,decker2018eltoo,avarikioti2020cerberus,aumayr2020bitcoin,miller2017sprites,egger2019atomic,aumayr2020generalized,dziembowski2017perun}; see~\cite{csur21crypto} for a recent survey.

\paragraph{Routing on PCNs.}
The original proposals for payment channels faced many practical challenges. A major challenge was how to design an efficient routing algorithm for a PCN: as the current balances on the network are not known, identifying a viable short path from sender to receiver proved to be difficult. Several routing protocols for PCNs have been developed to provide efficient solutions to this problem.
Flare~\cite{prihodko2016flare} and SilentWhispers~\cite{malavolta2017silentwhispers} both employ highly connected nodes to route the payments in order to improve the scalability of the routing algorithm. SpeedyMurmurs~\cite{roos2018settling} and VOUTE~\cite{RoosBS16} improve the process by leveraging a routing approach called prefix embeddings. Flash~\cite{Flash} uses a modified max-flow algorithm to find the optimal path, while Perun~\cite{dziembowski2017perun} introduces virtual channels to avoid routing through intermediaries.
On a different front, other routing discovery algorithms focus on maintaining the anonymity and privacy during the route discovery~\cite{pietrzak2021lightpir,avarikioti2021route} or the transaction execution~\cite{tripathy2020mappcn}.

However, all of these works focus on a different problem, that of efficient (and often privacy-preserving) routing in PCNs with arbitrary graph structure. We, on the other hand, simplify the network structure and examine how the users of such a graph topology can benefit from other functionalities of transaction aggregation such as canceling out and ``transferwise’’, in order to increase throughput and decrease the intermediaries’ fees.

\paragraph{Rebalancing PCNs.}
The previously mentioned routing approaches generally ignore the issue of balance depletion; channels that transfer coins mainly in one direction will soon get depleted and must top-up the balance on-chain.
The rebalancing problem was identified and addressed in Revive~\cite{khalil2017revive}, which was the first work to propose the use of rebalancing strategies. Many interesting follow-up works spawned afterwards: Pickhardt et al.~\cite{pickhardt2020imbalance} improved the balances of a PCN as a sequence of rebalancing operations of the channel funds, while Avarikioti et al.\ presented Hide \& Seek,~\cite{avarikioti2022hide} a privacy-preserving decentralized rebalancing protocol that improved upon Revive. 
Also the Bitcoin Lightning Network comes with rebalancing plugins, e.g.,  c-lightning\footnote{https://github.com/lightningd/plugins/tree/master/rebalance} and lnd\footnote{https://github.com/bitromortac/lndmanage}. 
The Spider Network~\cite{sivaraman2020high} splits payments into smaller units and routes them over multiple paths using waterfilling, aiming to efficiently route transactions while maintaining balanced channels. 
The Merchant~\cite{DBLP:conf/dappcon/EngelshovenR21} utilises adaptive fee strategies to incentivize the balanced use of payment channels, and \cite{LiMZ20} uses estimated payment demands along channels to plan the amount of funds to inject into a channel during channel creation.
There also exist several interesting approaches to save fees in payment channel networks, e.g., virtual channels~\cite{dziembowski2017perun}, and studies on the security implications of such fee mechanisms~\cite{aft20}.

Transaction aggregation is intrinsically related to the question of rebalancing in the sense that multiple flows can execute a given aggregate of transactions, and an appropriate choice of flow inevitably results in minimal channel depletion. In other words, rebalancing can be interpreted as transaction aggregation of an empty list of transactions.
However, none of these works leverage transaction aggregation to increase throughput in PCNs and hence do not enable routing transactions even in a disconnected graph topology with the newly introduced ``transferwise’’ functionality.

\paragraph{Atomic multi-hop transaction execution.}
Traditional payment channel networks like the Bitcoin Lightning network achieve the atomic execution of multi-hop payments using Hash Timelock Contracts or HTLCs~\cite{poon2015lightning,decker2015fast}. 
however, HTLC suffer from sepcific attacks such as the wormhole attack~\cite{malavolta2019anonymous}. In a recent line of work, several proposals exist that improve upon the use of simple HTLCs to guarantee the atomic execution of transactions, typically targeting efficiency and security in an extended  model~\cite{aumayr2021blitz,tairi2021anonymous,aumayr2022thora}.
These works are complementary to ours because our proposed solution requires the use of a protocol that can execute transactions in an atomic manner in general topologies where the graph can even be disconnected. Therefore, any efficiency improvements on the atomic transaction execution in generic topologies implies immediately an efficiency improvement on \sys. 

\paragraph{Netting.}
Our work is closely related to netting systems in financial markets, specifically netting for inter-bank settlements, which involve using a central entity (typically a central bank) to settle all liabilities of involved financial institutions~\cite{GARRATT2020105270}.  
The netting problem for banks, also known as the Bank Clearing Problem, is NP-complete and inapproximable. However, heuristic-based algorithms that yield approximate solutions in practice have been developed in ~\cite{GuntzerJL98} and~\cite{ShafranskyD06}. On the side of exact algorithms, a pseudopolynomial time exact algorithm for BCP was presented in~\cite{GuntzerJL98}. This exact algorithm is based on dynamic programming, but processes transactions in the same order as input. Our algorithm, also based on dynamic programming, improves upon theirs by exploiting discrepancy techniques to reorder the transactions and hence reduce time and memory requirements to linear.

Besides the algorithmic connection, netting is also related to our work in terms of privacy and decentralization. Although inter-bank netting traditionally relies on central banks as mediators, there are multiple reasons to seek decentralized alternatives. Central banks are trusted to preserve the confidentiality of transactional data, as well as perform the netting correctly, which incurs greater liability for all involved parties. As observed in~\cite{CaoYCNEH20}, finding trusted mediators for cross-border multi-currency transfer is also challenging. To this end, there are some works which aim to implement decentralised netting systems~\cite{jasper,CaoYCNEH20}. In particular, the work of Cao et al. provides a decentralized and optimal netting solution that also guarantees the privacy of payment amounts using smart contracts and zero knowledge proofs on the blockchain~\cite{CaoYCNEH20}. 
Our work, in contrast, aims to address this problem for payments on payment channel networks, and thus all our proposed solutions in \sys seek to reduce the usage of the underlying blockchain for arbitration.

\section{Conclusion}\label{sec:conclusion}
In this work, we formulated the transaction aggregation problem and presented a protocol solving it. 
Transaction aggregation can potentially increase throughput and even realize payments that were infeasible when executed sequentially. 
We presented a skeleton protocol for a hub-based business model that abstracts the components required to (i) privately compute the flows through each channel maximizing the transaction throughput, and (ii) execute those flows atomically.
We then presented \sys, an implementation of our skeleton protocol: In \sys we employed channel factories as the intra-hub channel structure. We further proposed a fixed-parameter linear algorithm to solve the transaction aggregation problem, which we execute via an MPC to maintain privacy of transactions and channel balances. Finally, \sys utilized Thora to execute the flow atomically across the PCN.
With \sys we (a) enhanced the liquidity of the PCN, (b) effectively reducing the transaction fees to the hubs, while (c) maintaining the privacy of the transaction's and channels' data.

We regard our work as a first step toward more general protocols.
Specifically, an interesting future work is to design a computationally tractable protocol for more general topologies than the hubs/clients topology we use. Algorithmic breakthroughs in the specific topology we consider would also benefit centralized or decentralized netting protocols.
Also, we can improve the privacy guarantees by anonymizing the involved users or hiding transaction requests from hubs.
Moreover, it is orthogonal to our work to compute a fee mechanism charged by the hubs. 
Note that including fees in the optimization component is straightforward and does not affect the protocol's complexity, although its effect on throughput is not understood.




\bibliographystyle{splncs04}
\bibliography{bibfile}
\end{document}